\documentclass[10pt,journal,twoside]{IEEEtran}

\usepackage{amsmath,amssymb,amsthm,bm}
\usepackage{mathtools}
\usepackage{graphicx}
\usepackage{cite}
\usepackage{microtype}
\usepackage{balance}

\newcommand{\R}{\mathbb{R}}
\newcommand{\E}{\mathbb{E}}
\newcommand{\diff}{\mathrm{d}}
\newcommand{\norm}[1]{\left\lVert#1\right\rVert}

\theoremstyle{plain}
\newtheorem{theorem}{Theorem}
\newtheorem{lemma}{Lemma}
\newtheorem{definition}{Definition}
\newtheorem{proposition}{Proposition}
\newtheorem{corollary}{Corollary}

\theoremstyle{remark}
\newtheorem{remark}{Remark}

\usepackage{booktabs}
\usepackage{tabularx}
\usepackage{array}
\newcolumntype{Y}{>{\raggedright\arraybackslash}X}

\usepackage{tikz}
\usetikzlibrary{arrows.meta, decorations.pathmorphing, calc, patterns}

\usepackage[
    colorlinks=true,
    linkcolor=blue,
    citecolor=blue,
    urlcolor=blue,
    bookmarks=true,
    pdfborder={0 0 0}
]{hyperref}

\usepackage[all]{hypcap}
\usepackage[capitalize]{cleveref}

\title{Capacity Scaling Laws for Boundary-Induced Drift--Diffusion Noise Channels}

\author{%
Yen-Chi~Lee,~\IEEEmembership{Member,~IEEE}%
\thanks{This work was supported by the National Science and Technology Council of Taiwan (NSTC 113-2115-M-008-013-MY3). (Corresponding author: Yen-Chi Lee.)}%
\thanks{%
Y.-C. Lee is with the Department of Mathematics, National Central University, Taoyuan, Taiwan (e-mail: \texttt{yclee@math.ncu.edu.tw}).%
}%
}

\begin{document}
\maketitle


\begin{abstract}
This paper studies the high-power capacity scaling of additive noise channels
whose noise arises from the first-hitting location of a multidimensional
drift--diffusion process on an absorbing hyperplane.
By identifying the underlying stochastic transport mechanism as a Gaussian
variance--mixture, we introduce and analyze the \emph{Normally-Drifted
First-Hitting Location} ($\mathrm{NDFHL}$) family as a geometry-driven model for
boundary-induced noise.
Under a second-moment constraint, we derive an exact high-SNR capacity expansion
and show that the asymptotic upper and lower bounds coincide at the constant
level, yielding a vanishing capacity gap.
As a consequence, isotropic Gaussian signaling is asymptotically
capacity-achieving for all fixed drift strengths, despite the non-Gaussian and
semi-heavy-tailed nature of the noise.
The pre-log factor is determined solely by the dimension of the receiving
boundary, revealing a geometric origin of the channel’s degrees of freedom.
The refined expansion further uncovers an entropy-dominant universality, whereby
all physical parameters of the transport process—including drift strength,
diffusion coefficient, and boundary separation—affect the capacity only through
the differential entropy of the induced noise.
Although the $\mathrm{NDFHL}$ density does not admit a simple closed form, its
entropy is shown to be finite and to vary continuously as the drift vanishes,
thereby connecting the finite-variance regime with the singular
infinite-variance Cauchy limit.
Together, these results provide a unified geometric and information-theoretic
characterization of boundary-hitting channels across both regular and singular
transport regimes.
\end{abstract}

\begin{IEEEkeywords}
Boundary hitting location, additive noise channels, variance-mixture noise, capacity scaling laws, stochastic transport, differential entropy, Cauchy distribution.
\end{IEEEkeywords}


\section{Introduction}
\label{sec:introduction}

\IEEEPARstart{F}{irst-passage} phenomena serve as a foundational mechanism in nonequilibrium statistical physics \cite{zwanzig2001nonequilibrium}, reaction-diffusion theory \cite{murray2007mathematical}, and, increasingly, in emerging communication paradigms where information is carried by physical particles \cite{Akyildiz:2008,Nakano:2013,Farsad:2016}. While classical stochastic analysis often focuses on temporal statistics \cite{salminen1988first,srinivas2012molecular,grebenkov2020joint,mazzolo2024first}, the \emph{first-hitting location} (FHL) on a spatially extended boundary \cite{byczkowski2010hitting} constitutes a critical geometric observable—effectively serving as the output signal (i.e., observable) of a transport-based channel \cite{lee26FHL}. From this perspective, the channel noise is not an extrinsic impairment but an intrinsic mechanism of the transport dynamics. 
Specifically, the interplay between drift and diffusion defines a characteristic length scale $l_u = 1/u$ (also referred to as the characteristic propagation distance \cite{lee25tail}), which governs the statistical nature of the noise: strong drift confines the spatial fluctuations, whereas vanishing drift leads to scale-free, heavy-tailed behavior \cite{lee25tail}. 
Understanding how this physical length scale regularizes the boundary statistics is therefore a prerequisite for establishing the fundamental information-theoretic limits of such geometry-dependent physical channels.

Beyond physical transport, first-hitting events also play a fundamental role in
financial risk analysis and mathematical finance
\cite{Merton:1974,Asmussen:2000}.
In multi-dimensional asset models, the crossing of a prescribed boundary—such as
a default barrier, solvency threshold, or regulatory constraint—marks a critical
system event.
At this instant, the first-hitting location encodes the joint state of correlated
assets, stochastic volatility factors, or auxiliary risk variables
\cite{BlackCox:1976,Jeanblanc:2009}.
From an information-theoretic perspective, the boundary observable thus represents
the residual uncertainty of the system conditioned on the occurrence of failure
or exit, rather than merely the timing of the event.
Characterizing the statistical structure and entropy of such boundary-induced
observables is therefore central to quantifying risk, dependence, and information
flow in path-dependent financial systems.

Despite the diverse physical and financial manifestations of first-hitting time--location phenomena, a unified information-theoretic framework for characterizing these boundary observables remained elusive until recently.

The formalization of these stochastic transport dynamics into a communication-theoretic framework was initiated in \cite{lee2016distribution,pandey2018molecular}. More recently, motivated by the unique constraints of molecular communication (MC), Lee et al. \cite{lee24CFAP} introduced the concrete \emph{First-Arrival-Position} (FAP) channel. In this model, the associated noise distribution was termed \emph{Vertically-Drifted} FAP (VDFAP), representing the spatial uncertainty of particles at the moment of capture. While that study succeeded in deriving the exact probability density function (PDF) of the hitting location for a drift--diffusion process, the resulting expression relies on complicated modified Bessel functions that obscure the underlying geometric simplicity.
Consequently, two critical theoretical gaps remain. First, the high-SNR capacity scaling law was not explicitly characterized in because the lower bound was expressed using the differential entropy of VDFAP noise, which lacks a general closed-form expression. 
Second, and more fundamentally, the density-based analysis fails in the regime of vanishing drift ($u \to 0$). As the noise power diverges in this limit, the previously established Max-Ent upper bound explodes, leaving the capacity scaling of the heavy-tailed Cauchy-limit regime analytically unresolved despite the physical continuity of the underlying transport process.

In this paper, we resolve these ambiguities by shifting the analytical focus from the intractable PDF to the underlying stochastic generation mechanism. Our core insight is identifying the boundary-induced noise as a \emph{Gaussian Variance-Mean Mixture} (GVM) \cite{BarndorffNielsen:1982} (in particular, the zero-mean-mixing setting), where the mixing variable is the inverse-Gaussian (IG) distributed first-hitting time \cite{Chhikara:1989}. This perspective reveals that the complicated Bessel-type semi-heavy tails (i.e., exhibiting exponential decay slower than Gaussian) are merely manifestations of the random arrival times modulating the spatial diffusion. 
By conditioning on the latent temporal dynamics, we convert the non-Gaussian channel into a conditionally Gaussian one. This enables the use of entropy inequalities and asymptotic expansions to derive matching bounds—a paradigm that has proven highly effective for characterizing the high-SNR behavior of complex physical channels, ranging from flat fading channels \cite{Lapidoth:2003} to wireless optical intensity channels \cite{moser2018}. 
Our GVM approach not only simplifies the mathematical treatment but also unifies the finite-variance and infinite-variance regimes under a single geometric framework.

The main contributions of this work are summarized as follows.

\begin{itemize}
    \item \textbf{Boundary-Induced Noise as a Gaussian Variance Mixture:}
    We formalize the \emph{Normally-Drifted First-Hitting Location}
    ($\mathrm{NDFHL}^{(d)}$) family as a geometry-induced Gaussian
    variance--mixture, parameterized by the ambient dimension $d$, a separation
    distance $\lambda$, and a normalized drift speed $u$.
    A compact closed-form characteristic function is obtained, and the infinite
    divisibility of the family is established, reflecting the additivity of
    layered stochastic transport mechanisms.

    \item \textbf{Geometry-Driven Capacity Scaling:}
    For additive channels corrupted by $\mathrm{NDFHL}$ noise under a
    second-moment constraint, we derive an exact high-SNR capacity expansion.
    The pre-log factor is shown to be determined solely by the dimension of the
    receiving boundary, demonstrating that the fundamental degrees of freedom
    are geometric in origin and insensitive to the tail behavior of the noise.

    \item \textbf{Vanishing Constant Gap and Asymptotic Optimality:}
    A refined high-SNR analysis reveals that the asymptotic upper and lower
    capacity bounds coincide at the constant level, yielding a vanishing
    capacity gap.
    As a consequence, isotropic Gaussian signaling is asymptotically
    capacity-achieving for all fixed drift strengths $u>0$, despite the
    non-Gaussian and semi-heavy-tailed nature of the boundary-induced noise.

    \item \textbf{Continuity to the Cauchy Regime:}
    In the vanishing-drift limit $u\to 0$, the $\mathrm{NDFHL}$ family converges
    in distribution to the multivariate Cauchy law.
    While the differential entropy $h(\mathbf N)$ does not admit a tractable
    closed-form expression for general $\mathrm{NDFHL}$ noise---and remains
    analytically intricate even in the Cauchy limit (see \eqref{eq:gp_def})---its
    finiteness is established rigorously.
    Numerical evaluations are used solely to visualize the entropy behavior as a
    function of $u$, confirming that although the noise variance diverges, the
    differential entropy---and hence the high-SNR capacity offset---remains
    finite and continuous, consistent with recent results on additive Cauchy
    noise~\cite{pang2025information}.
\end{itemize}

The remainder of this paper is organized as follows.
Section~\ref{sec:previous_work} reviews the FAP channel framework.
Section~\ref{sec:stochastic_foundation} establishes the stochastic foundation
and Gaussian variance--mixture structure.
The $\mathrm{NDFHL}$ distribution family is defined in
Section~\ref{sec:NDFHL_distribution}, followed by its analytical properties in
Section~\ref{sec:properties}.
The capacity scaling analysis and refined high-SNR expansion are presented in
Sections~\ref{sec:scaling} and~\ref{subsec:refined_scaling}.
Section~\ref{sec:singular_limit} discusses the singular Cauchy limit, and
Section~\ref{sec:conclusion} concludes the paper.


\section{Previous Work: The FAP Channel Framework}
\label{sec:previous_work}

The information-theoretic investigation of boundary-induced hitting-location observables was formally initiated in \cite{lee24CFAP}. That work introduced the FAP channel and provided the first rigorous characterization of the noise statistics induced by the first-hitting of a drift--diffusion process on a planar absorbing boundary. In this section, we summarize the analytical formulations derived in \cite{lee24CFAP}, which serve as the 
background of the current work.

\subsection{Explicit Noise Characterization}
\label{subsec:explicit_char}

Consider a particle released in an Euclidean space $\mathbb{R}^{p+1}$ (we set $d=p+1\in\mathbb{N}-\{1\}$) subject to a constant vertical drift $u > 0$ (towards the boundary) and diffusion coefficient $\sigma>0$. The noise vector $\mathbf{N} \in \mathbb{R}^p$ represents the transverse displacement upon hitting the planar boundary at a vertical distance $\lambda$. In \cite{lee24CFAP}, the authors established the following statistical properties for the vertically-drifted FAP noise:

\subsubsection{Probability Density Function}
The exact PDF of the noise vector, referred to as $\mathbf{N} \sim \mathrm{VDFAP}^{(p)}(u, \lambda)$ in \cite{lee24CFAP}, is a radially symmetric function involving the modified Bessel function of the second kind $K_\nu(\cdot)$, namely
\begin{equation}
\label{eq:FAP_pdf}
f_{\mathbf{N}}^{(p)}(\mathbf{n}) = \frac{2\lambda |u|^{\frac{p+1}{2}}}{(2\pi)^{\frac{p+1}{2}}} e^{\lambda |u|} \frac{K_{\frac{p+1}{2}} \left( |u| \sqrt{\|\mathbf{n}\|^2 + \lambda^2} \right)}{\left( \sqrt{\|\mathbf{n}\|^2 + \lambda^2} \right)^{\frac{p+1}{2}}},~~\mathbf{n} \in \mathbb{R}^p.
\end{equation}
This density explicitly captures the semi-heavy tail behavior, which decays as $\exp(-u\|\mathbf{n}\|)$ for large $\|\mathbf{n}\|$.
We note that the VDFAP notation represents the specific parameterization used in prior work.
In Section \ref{sec:NDFHL_distribution}, we will re-parameterize this probability law using a parameter set $(d, \lambda, u)$ and formally define it as the $\mathrm{NDFHL}$ distribution family to facilitate the unified capacity analysis.

\subsubsection{Characteristic Function}
The CF $\Phi_{\mathbf{N}}(\boldsymbol{\omega}) \triangleq \mathbb{E}[e^{i \langle \boldsymbol{\omega}, \mathbf{N}\rangle}]$ admits a compact closed form for any dimension $p$:
\begin{equation}
\label{eq:FAP_cf}
\Phi_{\mathbf{N}}^{(p)}(\boldsymbol{\omega}) = \exp \left( -\lambda \left[ \sqrt{\|\boldsymbol{\omega}\|^2 + |u|^2} - |u| \right] \right),
\end{equation}
where $\langle\cdot,\cdot\rangle$ denotes the standard Euclidean inner product.
As established in \cite{lee24CFAP}, letting $u \to 0$ in \eqref{eq:FAP_cf} directly recovers the CF of the multivariate Cauchy distribution.

\subsubsection{Moments}
The VDFAP noise is zero-mean ($\mathbb{E}[\mathbf{N}] = \mathbf{0}$), and its covariance matrix and second moment are given by
\begin{equation}
\label{eq:FAP_moments}
\mathbf{K}_{\mathbf{N}} \triangleq\mathbb{E}[\mathbf{N}\mathbf{N}^T]= \frac{\lambda}{|u|} \mathbf{I}_p, \quad \mathbb{E}[\|\mathbf{N}\|^2] = \frac{\lambda p}{|u|}.
\end{equation}
Note that the second moment diverges as $u \to 0$, revealing a physical singularity in the vanishing-drift regime.

\subsubsection{Differential Entropy (\texorpdfstring{$p=2$}{p=2} Special Case)}
While a general closed-form for the differential entropy $h(\mathbf{N})$ across all dimensions was not found yet, \cite{lee24CFAP} derived an explicit expression for the $p=2$ case (for 3D spatial system):
\begin{equation}
\label{eq:FAP_entropy_p2}
h(\mathbf{N}) = \log(2\pi e^3) + 2\log \lambda - \log(1+\lambda|u|) - \lambda|u| e^{\lambda|u|} \mathcal{I}(\lambda|u|),
\end{equation}
where $\mathcal{I}(s) = e \cdot \mathrm{Ei}(-1-s) - 3 \cdot \mathrm{Ei}(-s)$ and $\mathrm{Ei}(\cdot)$ is the exponential integral function. For general $p$, whether $h(\mathbf{N})$ remains finite and how it changes with $u$ is still unclear.

Notice that unless otherwise specified, all logarithms are natural.

\subsection{Theoretical Gaps and Motivation}

Despite these analytical results, \cite{lee24CFAP} left two fundamental questions unresolved, primarily due to the limitations of the complicated PDF forms.
\begin{itemize}
    \item \textbf{The Scaling Gap:} The capacity bounds established in \cite[Theorem~1, Theorem~2]{lee24CFAP} do not explicitly characterize the high-SNR scaling law. This is because the lower bound in \cite{lee24CFAP} is expressed in terms of the differential entropy of VDFAP noise, for which no general closed-form expression exists. While numerical simulations in \cite{lee24CFAP} for the $d=2$ case suggest that the upper and lower bounds converge as the power $P$ increases—indicating a $\log P$ scaling—a rigorous analytical proof remained elusive.
    
    \item \textbf{The Singularity at Zero-Drift Limit:} As shown in \eqref{eq:FAP_moments}, the noise power diverges as $u \to 0$. Consequently, the Max-Ent upper bound established in \cite{lee24CFAP} explodes (as shown in \cite[Fig.~5]{lee24CFAP}), leaving the capacity scaling of the Cauchy-limit regime analytically unresolved.
\end{itemize}

As a quick overview, the present work addresses these gaps by shifting the focus from the Bessel-type PDF to the latent GVM representation. To ensure analytical clarity, we adopt a refined notation: we denote the physical longitudinal drift by $\mu$ and introduce the \emph{normalized drift speed} $u = \mu/\sigma^2$. This allows us to provide a unified scaling law $C(P) = \frac{p}{2}\log P + \mathcal{L} + o(1)$ that remains robust even at the singular limit $u \to 0$. Here, $\mathcal{L}$ denotes the asymptotic capacity offset 
which will be discussed later in Section~\ref{sec:singular_limit}.


\section{New Observation: A Latent Gaussian Structure Underlying the First-Hitting Location}
\label{sec:stochastic_foundation}

To resolve the analytical challenges identified in Section~\ref{sec:previous_work}, we shift the focus from the intractable PDF to the underlying stochastic generation mechanism.
The core insight is the identification of the boundary-induced noise as a GVM.
This latent Gaussian structure not only simplifies the mathematical treatment but also provides the analytical foundation for our unified high-SNR capacity scaling analysis.

\begin{figure}[!ht]
    \centering
    \resizebox{\linewidth}{!}{%
    \begin{tikzpicture}[
        x={(1.4cm,0cm)},
        y={(0.5cm,0.4cm)},  
        z={(0cm,1cm)},      
        scale=2.0,
        >=stealth
    ]

    \def\hitX{0.7}    
    \def\hitY{0.6}    
    \def\lambdaDist{0.8} 
    \def\mutx{0.5}    
    \def\muty{0.2}    
    \def\mun{0.4}     

    \fill[blue!6, opacity=0.7] (-1.3,-1.2,0) -- (1.5,-1.2,0) -- (1.5,1.0,0) -- (-1.3,1.0,0) -- cycle;
    \draw[blue!35, dashed] (-1.3,-1.2,0) -- (1.5,-1.2,0) -- (1.5,1.0,0) -- (-1.3,1.0,0) -- cycle;

    \node[anchor=north west, font=\footnotesize, color=black!80, yshift=-2pt] at (-1.3,-1.2,0)
        {\textbf{Boundary plane} $\mathcal{B}:\ x_1 = \lambda$};

    \node[anchor=south east, font=\scriptsize, color=black!70] at (1.4,-1.1,0)
        {$\mathbf{N} \in \mathbb{R}^{d-1}$};

    \coordinate (O_icon) at (-1.1,-0.9,0.4); 
    \draw[->, thin, gray!60] (O_icon) -- ++(0,0,-0.3) node[below, font=\tiny] {$x_1$};

    \coordinate (X0) at (0,0,\lambdaDist);
    \fill[black] (X0) circle (1.6pt);
    \node[anchor=east, inner sep=6pt, font=\footnotesize] at (X0) {Initial point $\mathbf{0}$};

    \draw[dotted, thin, gray] (0,0,0) -- (X0);
    \draw[<->, thin] (0.15,0,0.04) -- (0.15,0,\lambdaDist+0.02)
        node[midway, right, font=\tiny, xshift=-1pt] {$\lambda$};

    \draw[->, thick, red!80!black] (X0) -- ++(\mutx,\muty,\mun)
        node[anchor=south west, inner sep=1pt, font=\footnotesize] {Uniform drift $\boldsymbol{\mu}$};

    \draw[->, thick, red!60!black] (X0) -- ++(\mutx,\muty,0)
        node[pos=1.0, anchor=north, font=\scriptsize, yshift=-2pt] {$\boldsymbol{\mu}_{\text{tan}}$}; 
    \draw[->, thick, red!60!black] (X0) -- ++(0,0,\mun)
        node[anchor=south, font=\scriptsize, yshift=1pt] {$\mu \mathbf{e}_1$};

    \coordinate (Hit) at (\hitX,\hitY,0);
    \draw[blue!80, thick, decorate,
          decoration={random steps, segment length=3pt, amplitude=2pt}]
        (X0) -- (0.25,0.15,0.6) -- (0.45,0.3,0.3) -- (Hit);

    \fill[red] (Hit) circle (1.6pt);

    \draw[->, >=latex, blue!40, thick] (0,0,0) -- (Hit) 
        node[midway, above, sloped, font=\tiny, inner sep=1pt] {$\mathbf{N}$};

    \node[anchor=west, align=left, font=\footnotesize, xshift=10pt] at (Hit) {
        \textbf{FHL} $\mathbf{N} \in \mathbb{R}^{d-1}$ \\
        at time $T$ \\
        {\scriptsize $X_T=(\mathbf{N}, \lambda)$}
    };

    \end{tikzpicture}%
    }
    \caption{Physical origin of the boundary-induced additive noise
    $\mathrm{NDFHL}^{(d)}$.
    A particle released from the origin $\mathbf{0}$ in the half-space
    $x_1<\lambda$ undergoes drift--diffusion and is absorbed upon first hitting the
    hyperplane $\mathcal{B}: x_1=\lambda$.
    The receiver records the first-hitting location
    $\mathbf{N}\in\mathbb{R}^{d-1}$ at the random hitting time $T$.
    This first-hitting location serves as the additive noise term in the equivalent
    channel model, with its statistics determined solely by the boundary geometry and the
    underlying drift--diffusion dynamics.}
\end{figure}
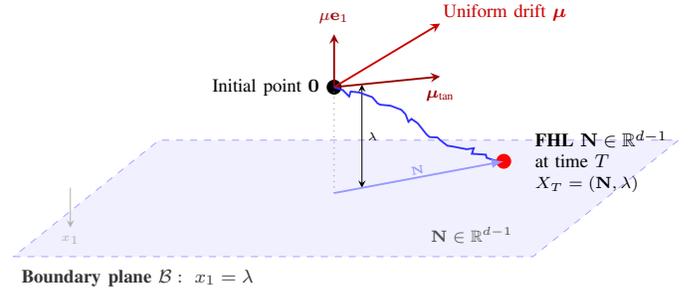

\subsection{System Model in \texorpdfstring{$\mathbb{R}^d$}{Rd}}

Consider an information carrier (particle) moving in a $d$-dimensional Euclidean space $\mathbb{R}^d$ ($d \ge 2$). Its position at time $t$ is decomposed into a longitudinal component $X_t \in \mathbb{R}$ and a $(d-1)$-dimensional transverse vector $\mathbf{Y}_t \in \mathbb{R}^{d-1}$. The particle initiates its motion from the origin at $t=0$, such that $X_0 = 0$ and $\mathbf{Y}_0 = \mathbf{0}$.

The motion of the particle is governed by the following system of stochastic differential equations (SDEs) \cite{oksendal2013stochastic}:
\begin{subequations}
\begin{align}
\label{eq:sde_system}
dX_t &= \mu dt + \sigma dW^{(x)}_t, \\
d\mathbf{Y}_t &= \sigma d\mathbf{W}^{(y)}_t,
\label{eq:sde_system2}
\end{align}
\end{subequations}
where $\mu$ represents the drift velocity along the longitudinal axis, and $\sigma>0$ is the diffusion coefficient. The terms $W^{(x)}_t$ and $\mathbf{W}^{(y)}_t$ denote a standard one-dimensional Brownian motion and a $(d-1)$-dimensional standard Brownian motion, respectively. Specifically, these processes are characterized by independent Gaussian increments, such that at any time $t > 0$:
\begin{equation}
W^{(x)}_t \sim \mathcal{N}(0, t), \quad \mathbf{W}^{(y)}_t \sim \mathcal{N}(\mathbf{0}, t\mathbf{I}_{d-1}),
\end{equation}
where $\mathbf{I}_{d-1}$ is the identity matrix of dimension $d-1$. We assume that all Brownian motion components are mutually independent.

Throughout this paper, the drift vector is assumed to be strictly orthogonal to the receiving boundary, with zero transverse components (i.e., $\boldsymbol{\mu}_{\text{tan}} = \mathbf{0}$). This configuration, termed \emph{Normally-Drifted}, is well-motivated by the MC literature. In diffusion-based MC systems, the longitudinal drift is primarily intended to assist transmission by directing the information carrier toward the receiver \cite{srinivas2012molecular}. Furthermore, many system-level optimizations in MC—such as those discussed in \cite{Ma2019,Cho2022,Chou2022}—are predicated on the assumption that this normal drift can be actively controlled or modulated to enhance performance. While the inclusion of tangential drift—a regime we term Tangentially-Drifted First-Hitting Location (TDFHL)—constitutes a meaningful extension, it is deferred to future investigation beyond the scope of this work.

To ensure a robust analysis, we consider the following parameter assumptions:
\begin{itemize}
    \item \textbf{Directionality:} Although the regime $\mu < 0$ is addressed in Section~\ref{subsec:reverse_drift}, it yields spatial statistics identical to the case with positive drift. Therefore, we assume $\mu > 0$ without loss of generality, focusing on particles directed toward the boundary.
    \item \textbf{Isotropy:} The diffusion coefficient $\sigma$ is assumed to be isotropic, affecting both longitudinal and transverse components equally (see \eqref{eq:sde_system} and \eqref{eq:sde_system2}).
\end{itemize}

\subsection{Temporal Statistics: The First-Passage Time}

The receiver is modeled as an infinite hyperplane located at a distance $\lambda > 0$ along the longitudinal axis, defined by the set $\mathcal{B} = \{ (x_1, \mathbf{y}) \in \mathbb{R} \times \mathbb{R}^{d-1} : x_1 = \lambda \}$. The random variable of  interest is the \textit{first-passage time} (or first-hitting time) $T$, defined as the first instant the particle contacts the boundary $\mathcal{B}$:
\begin{equation}
T = \inf \{ t > 0 : X_t = \lambda \}.
\end{equation}

For a Brownian motion with positive drift $\mu > 0$, the first-passage time $T$ is almost surely finite. Its probability density function follows the IG distribution \cite{srinivas2012molecular}, denoted as $T \sim IG(\nu, \kappa)$, with PDF
\begin{equation}
f_T(t) = \sqrt{\frac{\kappa}{2\pi t^3}} \exp\left( -\frac{\kappa (t - \nu)^2}{2\nu^2 t} \right), \quad t > 0,
\end{equation}
where the mean $\nu$ and the shape parameter $\kappa$ are determined by the physical parameters
\begin{equation}
\label{eq:IG_params}
\nu = \frac{\lambda}{\mu}, \quad \kappa = \frac{\lambda^2}{\sigma^2}.
\end{equation}

\begin{remark}[Parameter convention]
Throughout this paper, the physical drift velocity is denoted by $\mu$ and the
diffusion coefficient by $\sigma^2$.
We define the normalized drift speed as $u\triangleq\mu/\sigma^2$.
Accordingly, the first-hitting time satisfies
$\mathbb E[T]=\lambda/\mu=\lambda/(u\sigma^2)$.
All subsequent expressions are written consistently under this convention.
\end{remark}

\subsection{Spatial Dynamics: Boundary-Induced Noise}

Upon arriving at the boundary $x = \lambda$ at the random time $T$, the particle's location on the hyperplane is determined by the state of the transverse process $\mathbf{Y}_t$ at that instant. We define the \textit{first-hitting location} random vector $\mathbf{N} \in \mathbb{R}^{d-1}$ as
\begin{equation}
\mathbf{N} = \mathbf{Y}_T.
\end{equation}
Since the transverse process $\mathbf{Y}_t$ is a drift-free Brownian motion independent of the longitudinal process $X_t$ (and thus independent of the stopping time $T$), the conditional distribution of $\mathbf{N}$ given the arrival time $T=t$ is a multivariate Gaussian distribution (please compare to \eqref{eq:sde_system2}):
\begin{equation}
\mathbf{N} \mid T=t \sim \mathcal{N}(\mathbf{0}, \sigma^2 t \mathbf{I}_{d-1}).
\end{equation}
Here, $\mathbf{I}_{d-1}$ denotes the identity matrix of dimension $d-1$.

This construction identifies the marginal distribution of $\mathbf{N}$ as a GVM, with the first-passage time $T$ serving as the IG mixing variable. Specifically, $\mathbf{N}$ admits the following stochastic representation:
\begin{equation}
\label{eq:GVM_rep}
\mathbf{N} \stackrel{d}{=} \sigma \sqrt{T} \mathbf{Z},
\end{equation}
where $\mathbf{Z} \sim \mathcal{N}(\mathbf{0}, \mathbf{I}_{d-1})$ is a standard normal random vector independent of $T$. In \eqref{eq:GVM_rep}, the notation $\stackrel{d}{=}$ denotes equality in distribution.

\subsection{Physical Interpretation and Statistical Context}Beyond its mathematical definition, the normalized drift speed $u$ (with units $L^{-1}$) serves as a critical transport regime indicator. A large $u$ characterizes a ``ballistic'' regime where the particle acts like a directed bullet toward the boundary, whereas the limit $u \to 0$ represents a purely ``diffusive'' regime resembling the random walk of pollen. This fundamental spatial uncertainty as $u \to 0$ provides the physical origin for the heavy-tailed behavior and the resulting singularity in noise variance observed in boundary-induced processes.

\begin{remark}[On related distributions and novelty] While \eqref{eq:GVM_rep} identifies the NDFHL noise as a member of the Normal-Inverse-Gaussian (NIG) subclass within the Generalized Hyperbolic (GH) family, our novelty lies in the identification of its \emph{geometric generation mechanism}. Unlike standard statistical studies that treat shape parameters as abstract fitting variables, our framework derives the noise structure directly from drift--diffusion transport dynamics. This approach allows us to analyze \textit{capacity scaling laws} through the lens of transport geometry—a direction previously unexplored in the NIG/GH literature. \end{remark}


\section{The \texorpdfstring{$\mathrm{NDFHL}^{(d)}$}{NDFHL(d)} Distribution Family}
\label{sec:NDFHL_distribution}

The stochastic representation established in Section \ref{sec:stochastic_foundation} reveals that the boundary-induced hitting location noise arises from a GVM modulated by an IG time scale. 
By adopting the normalized drift $u$ as the governing parameter, this physical process crystallizes into a distinct and mathematically tractable family of distributions. 
In this section, we formally define this family as the \emph{Normally-Drifted First-Hitting Location} (NDFHL) distribution. 
We derive its generalized PDF, prove its radial symmetry, and analyze its moments, providing the rigorous mathematical footing required to resolve the capacity scaling gaps identified in Section \ref{sec:previous_work}.

\subsection{Definition and Probability Density Function}

Let the ambient space dimension be $d \geq 2$. We consider a random vector $\mathbf{N}$ taking values in a $(d-1)$-dimensional hyperplane $\mathbb{R}^{d-1}$. The distribution of $\mathbf{N}$ is parameterized by the boundary distance $\lambda > 0$ and the normalized drift speed $u \in \mathbb{R}$. For the case $u < 0$, the $\mathrm{NDFHL}$ distribution becomes a so-called \emph{defective distribution} with total probability less than $1$; however, due to the symmetry property established in Section~\ref{subsec:reverse_drift}, the spatial shape of this defective distribution (conditioned on hitting) is identical to the $\mathrm{NDFHL}$ with $|u| > 0$. As a result, without loss of generality, we will assume that $u > 0$ throughout this section. The case $u\to 0$ will be discussed in Section~\ref{sec:singular_limit}.

\begin{definition}[$\mathrm{NDFHL}^{(d)}$ Distribution]
A random vector $\mathbf{N} \in \mathbb{R}^{d-1}$ is said to follow the \emph{Normally-Drifted First-Hitting Location} distribution with parameters $(d, \lambda, u)$, denoted as $\mathbf{N} \sim \mathrm{NDFHL}^{(d)}(\lambda, u)$, if its probability density function is given by:
\begin{equation}
\label{eq:NDFHL_pdf}
f_{\mathbf{N}}(\mathbf{n}) = \frac{\lambda}{2^{\frac{d}{2}-1} \pi^{\frac{d}{2}}} \left( \frac{u}{\rho(\mathbf{n})} \right)^{\frac{d}{2}} e^{\lambda u} K_{\frac{d}{2}}\left( u \rho(\mathbf{n}) \right),
\end{equation}
where $\mathbf{n} \in \mathbb{R}^{d-1}$, and $\rho(\mathbf{n}) \triangleq \sqrt{\|\mathbf{n}\|^2 + \lambda^2}$ represents the Euclidean distance from the origin of the ambient space to the location $\mathbf{n}$ on the hyperplane. Here, $K_{\nu}(\cdot)$ denotes the modified Bessel function of the second kind of order $\nu$.
\end{definition}

Equation \eqref{eq:NDFHL_pdf} can be viewed as a refined re-parameterization of the VDFAP law introduced in prior work, specifically tailored to facilitate our geometric scaling analysis. The normalization of \eqref{eq:NDFHL_pdf} is consistency established by the characteristic function (Lemma~\ref{lem:char_func}) and the latent IG–Gaussian mixture representation in \eqref{eq:GVM_rep}.

For notational convenience, we denote the effective dimension of the noise vector (and the receiving boundary) by: 
\begin{equation}
p \triangleq d-1. 
\end{equation}
Throughout the remainder of this paper, $\mathbb{R}^p$ and $\mathbb{R}^{d-1}$ are used interchangeably to represent the transverse spatial domain.


The order of the Bessel function, $\nu = d/2$, reflects the dimension of the underlying diffusion process. The term $\rho(\mathbf{n})$ encapsulates the geometry of the first-hitting problem, coupling the radial displacement $\|\mathbf{n}\|$ with the propagation distance $\lambda$.

\subsection{Low Dimensional Special Cases}
\label{subsec:special_cases}

To provide physical intuition, we examine the PDF \eqref{eq:NDFHL_pdf} in the two most relevant dimensions for physical modeling and for MC context.

\subsubsection{The Planar Case (\texorpdfstring{$d=2$}{d=2})}
Consider a particle moving in a two-dimensional plane and hitting a one-dimensional line boundary ($p=1$). Setting $d=2$, the Bessel order becomes $\nu=1$, and the noise reduces to a scalar random variable $N \in \mathbb{R}$. The PDF simplifies to:
\begin{equation}
f_{N}(n) = \frac{\lambda u}{\pi \sqrt{n^2 + \lambda^2}} e^{\lambda u} K_1\left( u \sqrt{n^2 + \lambda^2} \right).
\end{equation}
This form is recognized in the molecular communication literature as the hitting probability density for a point receiver in a 2D diffusive environment \cite{pandey2018molecular}.

\subsubsection{The Spatial Case (\texorpdfstring{$d=3$}{d=3})}
Consider a particle moving in three-dimensional space hitting a two-dimensional planar boundary ($p=2$). Setting $d=3$, the noise is a vector $\mathbf{N} \in \mathbb{R}^2$. The PDF is first expressed by directly substituting the dimension into \eqref{eq:NDFHL_pdf}, yielding the Bessel form:
\begin{equation}
\label{eq:NDFHL_d3_bessel}
f_{\mathbf{N}}(\mathbf{n}) = \frac{\lambda}{\sqrt{2} \pi^{3/2}} \left( \frac{u}{\rho(\mathbf{n})} \right)^{3/2} e^{\lambda u} K_{3/2}\left( u \rho(\mathbf{n}) \right).
\end{equation}
This representation highlights the distribution's membership in the $\mathrm{NDFHL}$ family with a half-integer index $\nu = 3/2$. Using the closed-form identity for half-integer modified Bessel functions, $K_{3/2}(z) = \sqrt{\frac{\pi}{2z}} e^{-z} (1 + \frac{1}{z})$, the PDF reduces to an elementary function form \cite{lee2016distribution}:
\begin{equation}
\label{eq:NDFHL_d3_elementary}
f_{\mathbf{N}}(\mathbf{n}) = \frac{\lambda}{2\pi \rho(\mathbf{n})^3} (1 + u \rho(\mathbf{n})) \exp\left( -u (\rho(\mathbf{n}) - \lambda) \right).
\end{equation}
The comparison between \eqref{eq:NDFHL_d3_bessel} and \eqref{eq:NDFHL_d3_elementary} reveals the underlying geometric structure: the term $\lambda / (2\pi \rho^3)$ is precisely the Poisson kernel for the half-space, representing the hitting density of a pure diffusion process ($u=0$), while the remaining terms $(1 + u\rho) e^{-u(\rho-\lambda)}$ characterize the spatial confinement and exponential tail decay induced by the normalized drift $u$.

\subsection{Radial Symmetry and Geometric Properties}

The density function \eqref{eq:NDFHL_pdf} exhibits explicit radial symmetry. Since $f_{\mathbf{N}}(\mathbf{n})$ depends on the vector $\mathbf{n}$ solely through its norm $\|\mathbf{n}\|$, the distribution is isotropic on the hyperplane $\mathbb{R}^{d-1}$.

\begin{proposition}[Isotropy]
Let $\mathbf{N} \sim \mathrm{NDFHL}^{(d)}(\lambda, u)$. For any orthogonal matrix $\mathbf{Q} \in O(d-1)$, the random vector $\mathbf{Q}\mathbf{N}$ has the same distribution as $\mathbf{N}$.
\end{proposition}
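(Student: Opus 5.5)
The plan is to work directly from the density \eqref{eq:NDFHL_pdf} by a change of variables, and then cross-check with the latent representation \eqref{eq:GVM_rep}. Fix $\mathbf{Q}\in O(d-1)$ and set $\mathbf{M}=\mathbf{Q}\mathbf{N}$. The map $\mathbf{n}\mapsto\mathbf{Q}\mathbf{n}$ is a linear bijection of $\mathbb{R}^{d-1}$ with inverse $\mathbf{Q}^T$ and Jacobian $|\det\mathbf{Q}|=1$. The change-of-variables formula therefore gives
\begin{equation*}
f_{\mathbf{M}}(\mathbf{m})=f_{\mathbf{N}}(\mathbf{Q}^T\mathbf{m})\,|\det\mathbf{Q}^T|=f_{\mathbf{N}}(\mathbf{Q}^T\mathbf{m}).
\end{equation*}

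The key observation is that $f_{\mathbf{N}}$ depends on its argument only through $\rho(\mathbf{n})=\sqrt{\|\mathbf{n}\|^2+\lambda^2}$. Orthogonal matrices preserve the Euclidean norm, so $\|\mathbf{Q}^T\mathbf{m}\|=\|\mathbf{m}\|$ and hence $\rho(\mathbf{Q}^T\mathbf{m})=\rho(\mathbf{m})$. Substituting into \eqref{eq:NDFHL_pdf} yields $f_{\mathbf{M}}(\mathbf{m})=f_{\mathbf{N}}(\mathbf{m})$ for every $\mathbf{m}$. Since the two densities agree, $\mathbf{Q}\mathbf{N}\stackrel{d}{=}\mathbf{N}$.

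As an independent argument that avoids the Bessel-form density, I would use \eqref{eq:GVM_rep}: $\mathbf{N}\stackrel{d}{=}\sigma\sqrt{T}\mathbf{Z}$ with $\mathbf{Z}\sim\mathcal{N}(\mathbf{0},\mathbf{I}_{d-1})$ independent of $T$. Then $\mathbf{Q}\mathbf{N}\stackrel{d}{=}\sigma\sqrt{T}\,\mathbf{Q}\mathbf{Z}$. Moreover $\mathbf{Q}\mathbf{Z}\sim\mathcal{N}(\mathbf{0},\mathbf{Q}\mathbf{Q}^T)=\mathcal{N}(\mathbf{0},\mathbf{I}_{d-1})$, and $\mathbf{Q}\mathbf{Z}$ is still independent of $T$. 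Hence the pair $(T,\mathbf{Q}\mathbf{Z})$ has the same joint law as $(T,\mathbf{Z})$, and the claim follows.

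Equivalently, the characteristic function \eqref{eq:FAP_cf} depends only on $\|\boldsymbol{\omega}\|$, and $\Phi_{\mathbf{Q}\mathbf{N}}(\boldsymbol{\omega})=\Phi_{\mathbf{N}}(\mathbf{Q}^T\boldsymbol{\omega})$. By uniqueness of characteristic functions this gives the result a third way.

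There is no real obstacle here. The only point needing care is justifying that the density in \eqref{eq:NDFHL_pdf} is a genuine probability density. The paper obtains this via Lemma~\ref{lem:char_func} and \eqref{eq:GVM_rep}; the GVM route sidesteps the question entirely.
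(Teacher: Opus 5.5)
Your proof is correct and follows the paper's own justification: the paper only observes that the density in \eqref{eq:NDFHL_pdf} depends on $\mathbf{n}$ solely through $\|\mathbf{n}\|$, and your change-of-variables computation, using $|\det\mathbf{Q}|=1$ and $\rho(\mathbf{Q}^T\mathbf{m})=\rho(\mathbf{m})$, spells out that one-line argument. Your Gaussian-mixture and characteristic-function cross-checks are also valid, and the mixture route does avoid relying on the normalization of the Bessel-form density, but they are not needed.
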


This symmetry implies that the noise induced by the boundary is unbiased in the transverse directions. The ``semi-heavy-tailed'' nature of the distribution arises from the interaction between the exponential term $e^{\lambda u}$ and the asymptotic decay of the Bessel function. Using the asymptotic expansion $K_{\nu}(z) \approx \sqrt{\frac{\pi}{2z}} e^{-z}$ for large $z$, the PDF behaves as
\begin{equation}
f_{\mathbf{N}}(\mathbf{n}) \sim C \cdot \|\mathbf{n}\|^{-\frac{d+1}{2}} \exp\left( -u (\|\mathbf{n}\| - \lambda) \right) \quad \text{as } \|\mathbf{n}\| \to \infty.
\label{eq:semi-HT}
\end{equation}
Eq.~\eqref{eq:semi-HT} indicates that while the drift induces an exponential cutoff, the probability of large deviations remains significantly higher than that of a Gaussian distribution with equivalent variance, particularly for small $u$.

\subsection{Statistical Moments}

Due to the spherical symmetry derived above, the first moment (i.e., mean) of the distribution is centered at the origin.

\begin{proposition}[Mean]\label{prop:mean_zero}
If $\mathbf{N} \sim \mathrm{NDFHL}^{(d)}(\lambda, u)$, then the expectation exists and is given by:
\begin{equation}
\mathbb{E}[\mathbf{N}] = \mathbf{0}.
\end{equation}
\end{proposition}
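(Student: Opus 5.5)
The plan has two steps: show that $\mathbb{E}\|\mathbf{N}\|<\infty$, then deduce that the mean vanishes from the symmetry $\mathbf{N}\stackrel{d}{=}-\mathbf{N}$. The integrability step is the only real content. For it I would use the Gaussian variance-mixture representation \eqref{eq:GVM_rep}, $\mathbf{N}\stackrel{d}{=}\sigma\sqrt{T}\,\mathbf{Z}$, where $T\sim IG(\nu,\kappa)$ is independent of $\mathbf{Z}\sim\mathcal{N}(\mathbf{0},\mathbf{I}_{p})$.

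By independence and Jensen's inequality,
\begin{equation*}
\mathbb{E}\|\mathbf{N}\| = \sigma\,\mathbb{E}[\sqrt{T}]\,\mathbb{E}\|\mathbf{Z}\| \le \sigma\sqrt{\mathbb{E}[T]}\,\sqrt{\mathbb{E}\|\mathbf{Z}\|^2} = \sigma\sqrt{\lambda/\mu}\,\sqrt{p} < \infty .
\end{equation*}
Here $\mathbb{E}[T]=\nu=\lambda/\mu$ is finite because $u>0$. This is the step I expect to need the most care, because the statement is phrased in terms of the density \eqref{eq:NDFHL_pdf} rather than the stochastic construction. So I would make explicit that the law defined by \eqref{eq:NDFHL_pdf} coincides with the law of $\sigma\sqrt{T}\mathbf{Z}$, as the paper already asserts via the mixture representation and Lemma~\ref{lem:char_func}.

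As an independent check that works directly from the density, I would invoke the tail asymptotic \eqref{eq:semi-HT}: $f_{\mathbf{N}}(\mathbf{n})\lesssim \|\mathbf{n}\|^{-(d+1)/2}e^{-u\|\mathbf{n}\|}$ for large $\|\mathbf{n}\|$. The density is also bounded near the origin, since $\rho(\mathbf{n})\ge\lambda>0$. Integrating $\|\mathbf{n}\|$ against this in polar coordinates on $\mathbb{R}^{p}$ gives a finite integral, thanks to the exponential factor with $u>0$.

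With integrability established, the mean is well defined. By the Isotropy proposition applied with $\mathbf{Q}=-\mathbf{I}_{p}\in O(p)$, we have $-\mathbf{N}\stackrel{d}{=}\mathbf{N}$. Hence $\mathbb{E}[\mathbf{N}]=\mathbb{E}[-\mathbf{N}]=-\mathbb{E}[\mathbf{N}]$, which forces $\mathbb{E}[\mathbf{N}]=\mathbf{0}$.

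The same conclusion also follows from the mixture representation alone. Conditioning on $T$ gives $\mathbb{E}[\mathbf{N}\mid T]=\sigma\sqrt{T}\,\mathbb{E}[\mathbf{Z}]=\mathbf{0}$. The tower property, which is justified by the integrability shown above, then yields the claim.
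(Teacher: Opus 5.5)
Your proof is correct and follows the paper's route: the paper deduces $\mathbb{E}[\mathbf{N}]=\mathbf{0}$ directly from the spherical symmetry of the density, which is your $\mathbf{Q}=-\mathbf{I}_p$ step. The paper never separately shows that the expectation exists, and symmetry alone cannot give this (the symmetric Cauchy limit $u\to 0$ has no mean), so your bound $\mathbb{E}\|\mathbf{N}\|\le\sqrt{p\lambda/u}$ from the mixture representation supplies the step the paper leaves implicit; there it is recoverable only a posteriori from the finite covariance in Lemma~\ref{lem:variance}.
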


Due to the isotropy of the hitting process, the covariance matrix of $\mathbf{N}$ is proportional to the identity matrix, i.e., $\mathrm{Cov}(\mathbf{N}) = \sigma_N^2 \mathbf{I}_{d-1}$. We determine the component-wise variance $\sigma_N^2$ by leveraging the GVM structure established in Section~\ref{sec:stochastic_foundation}. Recall that $\mathbf{N}$ admits the stochastic representation $\mathbf{N} \stackrel{d}{=} \sigma \sqrt{T} \mathbf{Z}$ , where $\mathbf{Z} \sim \mathcal{N}(\mathbf{0}, \mathbf{I}_{d-1})$ and $T \sim IG(\lambda/\mu, \lambda^2/\sigma^2)$.

\begin{lemma}[Covariance]
The covariance matrix of $\mathbf{N} \sim \mathrm{NDFHL}^{(d)}(\lambda, u)$ is given by:
\begin{equation}
\mathbf{K}_{\mathbf{N}}=\mathbb{E}[\mathbf{N}\mathbf{N}^T] = \frac{\lambda}{u} \mathbf{I}_{d-1}.
\end{equation}
\label{lem:variance}
\end{lemma}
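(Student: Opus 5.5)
The natural route is to use the Gaussian variance-mixture representation \eqref{eq:GVM_rep}, $\mathbf{N} \stackrel{d}{=} \sigma\sqrt{T}\,\mathbf{Z}$. Here $\mathbf{Z}\sim\mathcal{N}(\mathbf{0},\mathbf{I}_{d-1})$ is independent of $T\sim IG(\lambda/\mu,\lambda^2/\sigma^2)$. This avoids integrating the Bessel-type density \eqref{eq:NDFHL_pdf} directly. We first condition on $T$. Because $\mathbf{Z}$ is independent of $T$,
\begin{equation*}
\mathbb{E}[\mathbf{N}\mathbf{N}^T \mid T] = \sigma^2 T\,\mathbb{E}[\mathbf{Z}\mathbf{Z}^T] = \sigma^2 T\,\mathbf{I}_{d-1}.
\end{equation*}
The tower property then gives $\mathbb{E}[\mathbf{N}\mathbf{N}^T] = \sigma^2\,\mathbb{E}[T]\,\mathbf{I}_{d-1}$. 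This step is legitimate because every entry $N_iN_j$ satisfies $|N_iN_j|\le \sigma^2 T\|\mathbf{Z}\|^2$, whose expectation $\sigma^2\mathbb{E}[T](d-1)$ is finite. Integrability is therefore guaranteed once $\mathbb{E}[T]<\infty$, so Fubini/Tonelli applies.

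Next we insert the inverse-Gaussian mean. By \eqref{eq:IG_params}, $\mathbb{E}[T]=\nu=\lambda/\mu$, which is finite since $\mu>0$. Hence
\begin{equation*}
\mathbb{E}[\mathbf{N}\mathbf{N}^T]=\frac{\sigma^2\lambda}{\mu}\mathbf{I}_{d-1}=\frac{\lambda}{u}\mathbf{I}_{d-1},
\end{equation*}
using the convention $u=\mu/\sigma^2$. Since $\mathbb{E}[\mathbf{N}]=\mathbf{0}$ by Proposition~\ref{prop:mean_zero}, this second-moment matrix is also the covariance. Taking the trace yields $\mathbb{E}\|\mathbf{N}\|^2=\lambda(d-1)/u$, consistent with \eqref{eq:FAP_moments}.

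The main point needing care is the normalization convention. The density \eqref{eq:NDFHL_pdf} depends only on $(\lambda,u)$, so the result must be independent of $\sigma$. Indeed the answer $\sigma^2\lambda/\mu=\lambda/u$ depends only on $u$, which confirms that the mixture representation and the definition of $\mathrm{NDFHL}^{(d)}(\lambda,u)$ are compatible.

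As a cross-check, one can use the characteristic function \eqref{eq:FAP_cf}, $\Phi(\boldsymbol\omega)=\exp(-\lambda[\sqrt{\|\boldsymbol\omega\|^2+u^2}-u])$. Expanding to second order, $\sqrt{\|\boldsymbol\omega\|^2+u^2}-u=\|\boldsymbol\omega\|^2/(2u)+O(\|\boldsymbol\omega\|^4)$, so $\Phi(\boldsymbol\omega)=1-\frac{\lambda}{2u}\|\boldsymbol\omega\|^2+o(\|\boldsymbol\omega\|^2)$. The Hessian at the origin is then $-\frac{\lambda}{u}\mathbf{I}$, which gives the same covariance.
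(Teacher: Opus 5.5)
Your proof is correct and follows the paper's route: condition on $T$ in $\mathbf{N}\stackrel{d}{=}\sigma\sqrt{T}\mathbf{Z}$ to get $\sigma^2\mathbb{E}[T]\,\mathbf{I}_{d-1}$, then substitute $\mathbb{E}[T]=\lambda/\mu$ and $u=\mu/\sigma^2$. The paper works entrywise, using the law of total variance for the diagonal and the law of total covariance for the off-diagonal terms, whereas you treat the whole matrix at once via $\mathbb{E}[\mathbf{Z}\mathbf{Z}^T]=\mathbf{I}_{d-1}$ and add an explicit integrability check; this is a small tightening of the same argument, not a different one.
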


\begin{IEEEproof}
By the law of total variance and the independence of $T$ and $\mathbf{Z}$:
\begin{align}
\text{Var}(N_i) &= \mathbb{E}[\text{Var}(N_i | T)] + \text{Var}(\mathbb{E}[N_i | T]) \nonumber \\
&= \mathbb{E}[\sigma^2 T] + \text{Var}(0) \nonumber \\
&= \sigma^2 \mathbb{E}[T].
\end{align}
Substituting the mean of the Inverse Gaussian distribution, $\mathbb{E}[T] = \frac{\lambda}{\mu}$, and using the definition of normalized drift $u = \mu/\sigma^2$, we obtain:
\begin{equation}
\text{Var}(N_i) = \sigma^2 \left( \frac{\lambda}{\mu} \right) = \frac{\lambda}{\mu/\sigma^2} = \frac{\lambda}{u}.
\end{equation}

The vanishing of the off-diagonal entries in $\mathbf{K}_{\mathbf{N}}$ follows directly from the independence of the spatial components of the underlying Brownian motion. Specifically, since the transverse process $\mathbf{Y}_t = (Y^{(1)}_t, \dots, Y^{(d-1)}_t)$ consists of $d-1$ mutually independent and drift-free Brownian motions, we have $\text{Cov}(Y^{(i)}_t, Y^{(j)}_t) = 0$ for all $t > 0$ whenever $i \neq j$. Furthermore, because the transverse process $\mathbf{Y}_t$ is independent of the longitudinal process $X_t$, the hitting time $T$ (which is determined solely by $X_t$) remains independent of the spatial fluctuations of $\mathbf{Y}_t$. By the law of total covariance, we obtain:
\begin{equation}
\label{eq:cov_derivation}
\begin{aligned}
\operatorname{Cov}(N_i, N_j) &= \mathbb{E}[\operatorname{Cov}(Y^{(i)}_T, Y^{(j)}_T \mid T)] \\
&\quad + \operatorname{Cov}(\mathbb{E}[Y^{(i)}_T \mid T], \mathbb{E}[Y^{(j)}_T \mid T]) \\
&= 0 + 0 = 0, \quad i \neq j.
\end{aligned}
\end{equation}
This confirms that the boundary-induced noise inherits the spatial isotropy of the diffusion process.

\end{IEEEproof}


\subsection{Two Limiting Cases}

The $\mathrm{NDFHL}^{(d)}$ family connects two distinct physical regimes through the parameter $u$.

\subsubsection{Drift-Dominated Regime (\texorpdfstring{$u \to \infty$}{u -> inf})}
In the drift-dominated regime, the hitting location $\mathbf{N}$ becomes increasingly concentrated around the longitudinal projection of the source. As $u \to \infty$, the strong drift suppresses the heavy-tailed fluctuations of the first-passage time $T$, causing the spatial distribution of $\mathbf{N}$ to converge toward a multivariate Gaussian law with a vanishing covariance matrix $\frac{\lambda}{u} \mathbf{I}_{d-1}$. This characterizes the ``ballistic'' limit, where the information carrier acts as a directed projectile toward the boundary, effectively minimizing the transverse dispersion induced by diffusion.

\subsubsection{Diffusion-Dominated Regime (\texorpdfstring{$u \to 0$}{u -> 0})}
In the limit of vanishing drift, the distribution approaches a generalized Cauchy distribution \cite{Verdu_Entropy23}. Specifically, for $u \to 0$, $K_{\nu}(z) \approx \frac{1}{2} \Gamma(\nu) (z/2)^{-\nu}$. Substituting this into \eqref{eq:NDFHL_pdf}:
\begin{equation}
\lim_{u \to 0} f_{\mathbf{N}}(\mathbf{n}) \propto \frac{\lambda}{\rho(\mathbf{n})^d} = \frac{\lambda}{(\|\mathbf{n}\|^2 + \lambda^2)^{d/2}}.
\end{equation}
This recovers the classical Poisson kernel for the half-space in $d$ dimensions, confirming that purely diffusive hitting probabilities are heavy-tailed (Cauchy-like) with undefined variance.

\subsection{Reverse Drift: The Defective \texorpdfstring{$\mathrm{NDFHL}$}{NDFHL} Family}
\label{subsec:reverse_drift}

Recall that the $\mathrm{NDFHL}$ family defined in \eqref{eq:NDFHL_pdf} assumes a positive normalized drift $u > 0$, ensuring that the particle hits the boundary with probability one. We now consider the reverse drift scenario, $u < 0$, where the drift pulls the particle away from the boundary.

In this regime, the first-passage time $T$ is no longer a proper random variable; there is a non-zero probability that the particle wanders off to infinity without ever crossing the hyperplane $x_1 = \lambda$. The total \textit{hitting probability} (i.e., survival probability) is given by the well-known exponential decay \cite{redner2001guide,MortersPeres:2010}:
\begin{equation}
P_{\text{hit}} \triangleq \mathbb{P}(T < \infty) = e^{-2\lambda |u|}.
\end{equation}
The complement, $1 - P_{\text{hit}}$, represents the vanishing probability.

Consequently, the distribution of the hitting location $\mathbf{N}$ becomes a \emph{defective distribution}. The unnormalized density, $f_{\mathbf{N}}^{\text{def}}(\mathbf{n})$, describes the likelihood of hitting the boundary at location $\mathbf{n}$. This defective density satisfies a remarkable symmetry property:
\begin{equation}
\label{eq:defective_pdf}
f_{\mathbf{N}}^{\text{def}}(\mathbf{n}; u) = e^{-2\lambda |u|} \cdot f_{\mathbf{N}}(\mathbf{n}; |u|),
\end{equation}
where $f_{\mathbf{N}}(\mathbf{n}; |u|)$ is the standard, normalized $\mathrm{NDFHL}^{(d)}$ PDF defined in \eqref{eq:NDFHL_pdf} with positive drift parameter $|u|$.

Substituting the definition of the $\mathrm{NDFHL}$ density, the explicit form of the defective density reads
\begin{equation}
f_{\mathbf{N}}^{\text{def}}(\mathbf{n}; u) = \frac{\lambda}{2^{\frac{d}{2}-1} \pi^{\frac{d}{2}}} \left( \frac{|u|}{\rho(\mathbf{n})} \right)^{\frac{d}{2}} e^{-\lambda |u|} K_{\frac{d}{2}}\left( |u| \rho(\mathbf{n}) \right).
\end{equation}
Note that the exponential term here is $e^{-\lambda |u|}$ (derived from $e^{-2\lambda|u|} \times e^{\lambda|u|}$), which ensures the integral over $\mathbb{R}^{d-1}$ equals $P_{\text{hit}} < 1$.

Equation \eqref{eq:defective_pdf} offers a profound physical interpretation: \emph{conditional on the event that the particle hits the boundary}, the spatial statistics of a particle fighting against a drift $-|u|$ are identical to those of a particle aided by a drift $+|u|$. The reverse drift suppresses the \emph{rate} of arrival, but it does not alter the \emph{shape} of the spatial dispersion for those trajectories that successfully arrive.

By integrating the defective NDFHL regime with the standard family, the NDFHL can be viewed as a generalized distribution class defined for all $u \in \mathbb{R}$. From this perspective, the NDFHL framework effectively generalizes the original VDFAP family proposed in prior work, providing a unified characterization that encompasses both certain and defective transport processes.


\section{Further Analytical Properties of the \texorpdfstring{$\mathrm{NDFHL}^{(d)}$}{NDFHL(d)} Family}
\label{sec:properties}

Having formally defined the $\mathrm{NDFHL}^{(d)}$ distribution and established its basic statistical properties, we now turn to its frequency-domain characterization, algebraic structure, and some entropy properties. 



\subsection{Characteristic Function}

The CF is a powerful tool for analyzing sums of independent random variables.
\begin{lemma}[Characteristic Function of $\mathrm{NDFHL}$]
\label{lem:char_func}
Let $\mathbf{N} \sim \mathrm{NDFHL}^{(d)}(\lambda, u)$ with $\lambda > 0$ and $u > 0$. The CF $\Phi_\mathbf{N}(\boldsymbol{\omega}) \triangleq \mathbb{E}[e^{i\langle \boldsymbol{\omega}, \mathbf{N} \rangle}]$ for $\omega \in \mathbb{R}^{d-1}$ is given by:
\begin{equation}
    \Phi_\mathbf{N}(\boldsymbol{\omega}) = \exp\left( -\lambda \left[ \sqrt{u^2 + \|\boldsymbol{\omega}\|^2} - u \right] \right).
    \label{eq:cf_closed_form}
\end{equation}
\end{lemma}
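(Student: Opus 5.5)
The plan is to work entirely through the Gaussian variance--mixture representation \eqref{eq:GVM_rep}, $\mathbf{N}\stackrel{d}{=}\sigma\sqrt{T}\,\mathbf{Z}$ with $T\sim IG(\lambda/\mu,\lambda^2/\sigma^2)$ independent of $\mathbf{Z}\sim\mathcal{N}(\mathbf{0},\mathbf{I}_{d-1})$, rather than integrating the Bessel density \eqref{eq:NDFHL_pdf} directly. First condition on $T$. Since $\mathbf{N}\mid T=t\sim\mathcal{N}(\mathbf{0},\sigma^2 t\mathbf{I}_{d-1})$, the Gaussian characteristic function gives $\E[e^{i\langle\boldsymbol{\omega},\mathbf{N}\rangle}\mid T=t]=\exp(-\tfrac12\sigma^2 t\|\boldsymbol{\omega}\|^2)$. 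The tower property, justified by bounded convergence because the integrand has modulus one, then yields
\begin{equation*}
\Phi_{\mathbf{N}}(\boldsymbol{\omega})=\E\!\left[e^{-sT}\right],\qquad s=\tfrac12\sigma^2\|\boldsymbol{\omega}\|^2\ge 0 .
\end{equation*}
This reduces the problem to the Laplace transform of the first-passage time.

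The second step is to evaluate $\E[e^{-sT}]$ for $T=\inf\{t:X_t=\lambda\}$. I would use the exponential martingale $M_t=\exp(\theta X_t-(\theta\mu+\tfrac12\theta^2\sigma^2)t)$. Choose $\theta>0$ solving $\theta\mu+\tfrac12\theta^2\sigma^2=s$, that is,
\begin{equation*}
\theta=\frac{-\mu+\sqrt{\mu^2+2\sigma^2 s}}{\sigma^2}.
\end{equation*}
Optional stopping at $T\wedge n$ gives $\E[M_{T\wedge n}]=1$. On $\{t\le T\}$ we have $X_t\le\lambda$, so $0\le M_{T\wedge n}\le e^{\theta\lambda}$. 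Also $T<\infty$ a.s. because $\mu>0$. Dominated convergence as $n\to\infty$ therefore gives $e^{\theta\lambda}\E[e^{-sT}]=1$. Alternatively, I could simply cite the standard IG Laplace transform $\E[e^{-sT}]=\exp\{(\kappa/\nu)(1-\sqrt{1+2\nu^2 s/\kappa})\}$ and substitute the parameters from \eqref{eq:IG_params}.

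The last step is algebra. Substituting $s=\tfrac12\sigma^2\|\boldsymbol{\omega}\|^2$ gives $\sqrt{\mu^2+\sigma^4\|\boldsymbol{\omega}\|^2}=\sigma^2\sqrt{u^2+\|\boldsymbol{\omega}\|^2}$ with $u=\mu/\sigma^2$. Hence $\theta=\sqrt{u^2+\|\boldsymbol{\omega}\|^2}-u$, which produces \eqref{eq:cf_closed_form}. Note that $\sigma$ drops out, consistent with the density depending only on $(d,\lambda,u)$.

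The main obstacle is conceptual rather than computational. The lemma is stated for $\mathbf{N}$ defined by the density \eqref{eq:NDFHL_pdf}, so I must know that this density coincides with the law of $\sigma\sqrt{T}\mathbf{Z}$. I would close this gap by computing the mixture density $\int_0^\infty(2\pi\sigma^2t)^{-p/2}e^{-\|\mathbf{n}\|^2/(2\sigma^2t)}f_T(t)\,dt$. Expanding $f_T$, the exponent combines into $-\tfrac{\rho(\mathbf{n})^2}{2\sigma^2 t}-\tfrac{\mu^2 t}{2\sigma^2}+\lambda u$, and the standard integral $\int_0^\infty t^{-\nu-1}e^{-a/t-bt}\,dt=2(b/a)^{\nu/2}K_\nu(2\sqrt{ab})$ with $\nu=d/2$ recovers \eqref{eq:NDFHL_pdf}. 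The paper already treats this identification as the definition of the family, so a brief remark should suffice, with the martingale step as the core argument.
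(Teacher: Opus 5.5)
Your proposal is correct and follows the paper's route: you condition on $T$ in $\mathbf{N}\stackrel{d}{=}\sigma\sqrt{T}\,\mathbf{Z}$ to reduce $\Phi_{\mathbf{N}}$ to the inverse-Gaussian Laplace transform at $s=\tfrac12\sigma^2\|\boldsymbol{\omega}\|^2$, then substitute $\nu=\lambda/\mu$, $\kappa=\lambda^2/\sigma^2$ and $u=\mu/\sigma^2$. The paper cites the IG moment generating function and takes for granted that the Bessel density \eqref{eq:NDFHL_pdf} is the law of $\sigma\sqrt{T}\,\mathbf{Z}$, so your optional-stopping derivation and your $K_{d/2}$ integral check add rigor without changing the argument.
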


\begin{IEEEproof}
We exploit the stochastic representation $\mathbf{N} = \sigma\sqrt{T}\mathbf{Z}$, where $\mathbf{Z} \sim \mathcal{N}(0, \mathbf{I}_{d-1})$ and $T \sim IG(\lambda/\mu, \lambda^2/\sigma^2)$. By the law of iterated expectations:
\begin{align}
    \Phi_\mathbf{N}(\boldsymbol{\omega}) &= \mathbb{E}_T\left[ \mathbb{E}_\mathbf{Z}\left[ e^{i\langle \boldsymbol{\omega}, \sigma\sqrt{T}\mathbf{Z} \rangle} \mid T \right] \right] \nonumber \\
    &= \mathbb{E}_T\left[ \exp\left( -\frac{1}{2} \|\sigma\sqrt{T}\boldsymbol{\omega}\|^2 \right) \right] \nonumber \\
    &= \mathbb{E}_T\left[ \exp\left( -\frac{\sigma^2 \|\boldsymbol{\omega}\|^2}{2} T \right) \right].
\label{eq:CF_step1}
\end{align}
The expectation in \eqref{eq:CF_step1} corresponds to the Moment Generating Function (MGF) of the Inverse Gaussian variable $T$, evaluated at $s = - \frac{\sigma^2 \|\boldsymbol{\omega}\|^2}{2}$.
The MGF of $T \sim IG(\nu, \kappa)$ is known as \cite{Chhikara:1989}: 
\begin{equation}
M_T(s) = \exp\left( \frac{\kappa}{\nu} \left[ 1 - \sqrt{1 - \frac{2\nu^2 s}{\kappa}} \right] \right),
\end{equation}
for $s \leq \frac{\kappa}{2\nu^2}$. 
Substituting the physical parameters $\nu = \lambda/\mu$ and $\kappa = \lambda^2/\sigma^2$, and utilizing the normalized drift $u = \mu/\sigma^2$, we have:
\begin{itemize}
    \item Ratio $\frac{\kappa}{\nu} = \frac{\lambda^2/\sigma^2}{\lambda/(u\sigma^2)} = \lambda u$.
    \item Term inside the square root:
    \begin{align}
    1 - \frac{2\nu^2}{\kappa} s &= 1 - \frac{2(\lambda^2/u^2 \sigma^4)}{\lambda^2/\sigma^2} \left( -\frac{\sigma^2 \|\boldsymbol{\omega}\|^2}{2} \right) \nonumber \\
    &= 1 + \frac{2}{u^2 \sigma^2} \cdot \frac{\sigma^2 \|\boldsymbol{\omega}\|^2}{2} \nonumber \\
    &= 1 + \frac{\|\boldsymbol{\omega}\|^2}{u^2} = \frac{u^2 + \|\boldsymbol{\omega}\|^2}{u^2}.
    \end{align}
\end{itemize}
Substituting these back into the MGF expression yields:
\begin{align}
\Phi_{\mathbf{N}}(\boldsymbol{\omega}) &= \exp\left( \lambda u \left[ 1 - \sqrt{\frac{u^2 + \|\boldsymbol{\omega}\|^2}{u^2}} \right] \right) \nonumber \\
&= \exp\left( \lambda u - \lambda \sqrt{u^2 + \|\boldsymbol{\omega}\|^2} \right),
\end{align}
which rearranges to \eqref{eq:CF_step1}.
\end{IEEEproof}

Lemma~\ref{lem:char_func} offers immediate insights. As $u \to 0$, the exponent approaches $-\lambda \|\boldsymbol{\omega}\|$, correctly recovering the CF of the multivariate Cauchy distribution. Conversely, as $u \to \infty$, a Taylor expansion shows $\Phi_{\mathbf{N}}(\boldsymbol{\omega}) \to \exp(-\frac{\lambda}{2u} \|\boldsymbol{\omega}\|^2)$, recovering the Gaussian limit.

To see this more clearly, we rewrite the exponent in \eqref{eq:CF_step1} as $\lambda u [ 1 - \sqrt{1 + (\|\boldsymbol{\omega}\|^2/u^2)} ]$. As $u \to \infty$, the term $\|\boldsymbol{\omega}\|^2/u^2$ becomes a small perturbation. By applying the first-order Taylor expansion $\sqrt{1+x} = 1 + \frac{x}{2} + O(x^2)$ for $x \to 0$, the exponent simplifies to
\begin{equation}
\lambda u \left[ 1 - \left( 1 + \frac{\norm{\boldsymbol{\omega}}^2}{2u^2} + \dots \right) \right] \approx -\frac{\lambda \norm{\boldsymbol{\omega}}^2}{2u}.
\end{equation}
The resulting CF $\Phi_{\mathbf{N}}(\boldsymbol{\omega}) \approx \exp(-\frac{\lambda}{2u} \norm{\boldsymbol{\omega}}^2)$ is the characteristic function of a multivariate Gaussian.

\subsection{Convolution Closure and Infinite Divisibility}

Another fundamental property of the NDFHL family is its closure under convolution with respect to the distance parameter $\lambda$. This property is a direct consequence of the semigroup property of the underlying Brownian motion.

\begin{lemma}[Convolution Closure]
\label{lem:convolution}
Let $\mathbf{N}_1 \sim \mathrm{NDFHL}^{(d)}(\lambda_1, u)$ and $\mathbf{N}_2 \sim \mathrm{NDFHL}^{(d)}(\lambda_2, u)$ be independent random vectors sharing the same dimension and normalized drift $u$. Then their sum follows:
\begin{equation}
\mathbf{N}_1 + \mathbf{N}_2 \sim \mathrm{NDFHL}^{(d)}(\lambda_1 + \lambda_2, u).
\end{equation}
\end{lemma}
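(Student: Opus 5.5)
The plan is to argue in the frequency domain, using the closed-form characteristic function of Lemma~\ref{lem:char_func}, and then invoke the uniqueness theorem for characteristic functions.

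Since $\mathbf{N}_1$ and $\mathbf{N}_2$ are independent, the characteristic function of their sum factorizes. For every $\boldsymbol{\omega}\in\R^{d-1}$,
\begin{equation*}
\Phi_{\mathbf{N}_1+\mathbf{N}_2}(\boldsymbol{\omega})=\Phi_{\mathbf{N}_1}(\boldsymbol{\omega})\,\Phi_{\mathbf{N}_2}(\boldsymbol{\omega})
=\exp\left(-(\lambda_1+\lambda_2)\left[\sqrt{u^2+\norm{\boldsymbol{\omega}}^2}-u\right]\right).
\end{equation*}
This is exactly \eqref{eq:cf_closed_form} with $\lambda$ replaced by $\lambda_1+\lambda_2>0$, so it is the characteristic function of $\mathrm{NDFHL}^{(d)}(\lambda_1+\lambda_2,u)$. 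The key structural point is that $\lambda$ enters the exponent linearly. Physically, this reflects the fact that the log-Laplace exponent of the inverse Gaussian hitting time is linear in the distance.

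By L\'evy's uniqueness theorem, two laws with the same characteristic function coincide, and the claim follows.

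The only point needing care is the identification step: we must know that \eqref{eq:cf_closed_form} actually corresponds to the density \eqref{eq:NDFHL_pdf}, i.e., that the definition is consistent. Lemma~\ref{lem:char_func} supplies this through the representation \eqref{eq:GVM_rep}, so I take it as given.

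As a physical cross-check, one can also argue pathwise. By the strong Markov property, the hitting time of level $\lambda_1+\lambda_2$ splits as $T=T_1+T_2$, where $T_1$ is the hitting time of level $\lambda_1$ and $T_2$ is an independent copy of the hitting time of level $\lambda_2$. Then
\begin{equation*}
\mathbf{Y}_T=\mathbf{Y}_{T_1}+\left(\mathbf{Y}_{T}-\mathbf{Y}_{T_1}\right),
\end{equation*}
and the two summands are independent, distributed as $\mathbf{N}_1$ and $\mathbf{N}_2$ respectively. This is the semigroup property mentioned in the paper.

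I expect no substantive obstacle beyond this.
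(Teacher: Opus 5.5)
Your proof is correct and follows the paper's argument: you factor the characteristic function of the sum by independence, note that $\lambda$ enters the exponent of \eqref{eq:cf_closed_form} linearly, and conclude by the uniqueness theorem for characteristic functions. Your strong-Markov pathwise cross-check is also sound, and it makes explicit the semigroup intuition that the paper only mentions in passing.
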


\begin{IEEEproof}
Let $\Phi_1(\boldsymbol{\omega})$ and $\Phi_2(\boldsymbol{\omega})$ be the CFs of $\mathbf{N}_1$ and $\mathbf{N}_2$. Since they are independent, the CF of the sum is the product of their CFs:
\begin{align}
\Phi_{\mathbf{N}_1 + \mathbf{N}_2}(\boldsymbol{\omega}) &= \Phi_1(\boldsymbol{\omega}) \cdot \Phi_2(\boldsymbol{\omega}) \nonumber \\
&= e^{-\lambda_1 (\sqrt{u^2 + \|\boldsymbol{\omega}\|^2} - u)} \cdot e^{-\lambda_2 (\sqrt{u^2 + \|\boldsymbol{\omega}\|^2} - u)} \nonumber \\
&= e^{-(\lambda_1 + \lambda_2) (\sqrt{u^2 + \|\boldsymbol{\omega}\|^2} - u)}.
\end{align}
By the uniqueness theorem of CFs, this is the CF of an $\mathrm{NDFHL}$ variable with distance parameter $\lambda_1 + \lambda_2$.
\end{IEEEproof}

\begin{corollary}[Infinite Divisibility]
The $\mathrm{NDFHL}^{(d)}$ distribution is infinitely divisible. That is, for any integer $k \ge 1$, a random vector $\mathbf{N} \sim \mathrm{NDFHL}^{(d)}(\lambda, u)$ can be represented as the sum of $k$ independent and identically distributed random vectors $\mathbf{N}^{(1)}, \dots, \mathbf{N}^{(k)}$, where each $\mathbf{N}^{(j)} \sim \mathrm{NDFHL}^{(d)}(\lambda/k, u)$.
\label{cor:ID}
\end{corollary}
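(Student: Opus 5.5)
The plan is to apply Lemma~\ref{lem:convolution} by induction on $k$, and then finish with the uniqueness theorem for characteristic functions. Fix $k\ge 1$ and let $\mathbf{N}^{(1)},\dots,\mathbf{N}^{(k)}$ be i.i.d.\ with law $\mathrm{NDFHL}^{(d)}(\lambda/k,u)$. This law is well defined because $\lambda/k>0$ and $u>0$. Concretely, each summand can be realised through the representation \eqref{eq:GVM_rep}: take a first-passage time $T\sim IG\!\left(\tfrac{\lambda/k}{\mu},\tfrac{(\lambda/k)^2}{\sigma^2}\right)$ and an independent $\mathbf{Z}\sim\mathcal N(\mathbf 0,\mathbf I_{d-1})$. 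This also shows that the family is closed under rescaling the distance parameter.

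The inductive claim is that $\mathbf{S}_j\triangleq\sum_{i=1}^{j}\mathbf{N}^{(i)}\sim\mathrm{NDFHL}^{(d)}(j\lambda/k,u)$ for each $j=1,\dots,k$. The case $j=1$ holds trivially. For the inductive step, note that $\mathbf{S}_{j}$ is a function of $\mathbf{N}^{(1)},\dots,\mathbf{N}^{(j)}$ only, so it is independent of $\mathbf{N}^{(j+1)}$. The two summands share the same $d$ and $u$. Lemma~\ref{lem:convolution} therefore gives $\mathbf{S}_{j+1}\sim\mathrm{NDFHL}^{(d)}((j+1)\lambda/k,u)$. Taking $j=k$ yields $\mathbf{S}_k\sim\mathrm{NDFHL}^{(d)}(\lambda,u)$, which is the required decomposition: $\mathbf N\stackrel{d}{=}\mathbf N^{(1)}+\cdots+\mathbf N^{(k)}$.

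As a cross-check, one can bypass the induction with Lemma~\ref{lem:char_func}, which gives
\begin{equation*}
\Phi_{\mathbf N}(\boldsymbol\omega)=\exp\!\left(-\lambda\left[\sqrt{u^2+\|\boldsymbol\omega\|^2}-u\right]\right)=\left[\exp\!\left(-\tfrac{\lambda}{k}\left[\sqrt{u^2+\|\boldsymbol\omega\|^2}-u\right]\right)\right]^k .
\end{equation*}
The bracketed factor is exactly the CF of $\mathrm{NDFHL}^{(d)}(\lambda/k,u)$. The $k$-th power is therefore the CF of the sum of $k$ i.i.d.\ such vectors, and uniqueness of CFs identifies the law. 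In other words, $\Phi_{\mathbf N}$ has a $k$-th root that is itself a CF, for every $k$, which is precisely the definition of infinite divisibility.

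There is no real obstacle here. The only point that needs care is making explicit that the $k$-th root is a genuine characteristic function. A power of a complex function need not be one in general, but here the root is the CF of a member of the family, namely $\mathrm{NDFHL}^{(d)}(\lambda/k,u)$. Optionally, one could also note the Lévy–Khintchine interpretation, namely that $\lambda\mapsto\mathrm{NDFHL}^{(d)}(\lambda,u)$ is a convolution semigroup, but the corollary does not need it.
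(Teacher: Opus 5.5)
Your proposal is correct and follows the paper's route: the paper gives no separate proof, placing the corollary immediately after Lemma~\ref{lem:convolution} as a direct consequence of iterating that convolution closure (equivalently, of taking $k$-th roots of the closed-form characteristic function in Lemma~\ref{lem:char_func}) with distance parameter $\lambda/k$. The paper adds only a physical reading of the same decomposition, namely i.i.d.\ transverse increments accumulated across $k$ slabs of thickness $\lambda/k$, which your argument does not need.
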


The infinite divisibility of $\mathbf{N}$ reflects the spatial homogeneity and Markovian nature of the underlying transport process. Specifically, the transverse displacement accumulated over a distance $\lambda$ can be decomposed into the vector sum $\mathbf{N} = \sum_{j=1}^k \mathbf{N}^{(j)}$, where each $\mathbf{N}^{(j)}$ represents the i.i.d. increment over a slab of thickness $\lambda/k$. Since this additive decomposition holds for any $k$, the NDFHL distribution is inherently infinitely divisible. This property provides a consistent analytical framework for multi-hop relay channels in MC context, where the total noise is the superposition of independent link-wise displacements.

\subsection{Entropy Structure and Bounds}
\label{subsec:entropy_analysis}

While the exact differential entropy $h(\mathbf{N})$ does not admit a simple elementary closed form for general dimensions, its structural behavior can be fully constrained by the mixture representation. By decomposing the entropy into a conditional component and a mutual information term, $h(\mathbf{N}) = h(\mathbf{N}|T) + I(\mathbf{N};T)$, we can sandwich the entropy between tractable physical bounds.

\begin{lemma}[Entropy Bounds and Decomposition]
\label{lem:entropy_bounds}
Fix any normalized drift $u > 0$, the differential entropy of the NDFHL noise vector $\mathbf{N} \sim \mathrm{NDFHL}^{(d)}(\lambda, u)$ is bounded by:
\begin{equation}
\label{eq:entropy_inequality}
\underline{h}(u) \le h(\mathbf{N}) \le \bar{h}(u),
\end{equation}
where the lower bound is the conditional entropy given the hitting time:
\begin{equation}
\label{eq:lower_bound_def}
\underline{h}(u) \triangleq h(\mathbf{N}|T) = \frac{p}{2} \log(2\pi e \sigma^2) + \frac{p}{2} \mathbb{E}[\log T],
\end{equation}
and the upper bound is the maximum entropy of a Gaussian vector with covariance $\frac{\lambda}{u}\mathbf{I}_p$:
\begin{equation}
\label{eq:upper_bound_def}
\bar{h}(u) \triangleq \frac{p}{2} \log\left( 2\pi e \frac{\lambda}{u} \right).
\end{equation}
Consequently, the entropy gap (or shape penalty) $I(\mathbf{N};T) \triangleq h(\mathbf{N}) - h(\mathbf{N}|T)$ is determined by the Jensen gap of the mixing time $T$:
\begin{equation}
\label{eq:jensen_gap}
0 \le h(\mathbf{N}) - \underline{h}(u) 
\le \bar{h}(u) - \underline{h}(u)
\le \frac{p}{2} \left( \log \mathbb{E}[T] - \mathbb{E}[\log T] \right).
\end{equation}
\end{lemma}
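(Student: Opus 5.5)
The plan is to work throughout with the Gaussian variance-mixture representation \eqref{eq:GVM_rep}, $\mathbf{N}\stackrel{d}{=}\sigma\sqrt{T}\mathbf{Z}$, where $T\sim IG(\lambda/\mu,\lambda^2/\sigma^2)$ is independent of $\mathbf{Z}\sim\mathcal{N}(\mathbf{0},\mathbf{I}_p)$. Every bound in \eqref{eq:entropy_inequality}--\eqref{eq:jensen_gap} then reduces to a standard information identity combined with an explicit Gaussian computation.

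\textbf{Step 1: lower bound.} Given $T=t$, the vector $\mathbf{N}$ is $\mathcal{N}(\mathbf{0},\sigma^2 t\,\mathbf{I}_p)$, so
\[
h(\mathbf{N}\mid T=t)=\tfrac{p}{2}\log(2\pi e\sigma^2 t).
\]
Averaging over $T$ gives \eqref{eq:lower_bound_def}. This requires $\mathbb{E}[|\log T|]<\infty$. That holds for the inverse-Gaussian law: the density vanishes faster than any power as $t\to0$ (it contains the factor $e^{-\kappa/(2t)}$), and $\mathbb{E}[T]<\infty$ controls $\log T$ for large $t$.

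The mutual information $I(\mathbf{N};T)=h(\mathbf{N})-h(\mathbf{N}\mid T)$ is nonnegative, which yields $h(\mathbf{N})\ge\underline{h}(u)$. Writing the identity this way presupposes that $h(\mathbf{N})$ is finite and well defined. I would secure this first. Finiteness from above follows from the finite second moment in Lemma~\ref{lem:variance}, via the Gaussian maximum-entropy bound. Finiteness from below holds because the density \eqref{eq:NDFHL_pdf} is bounded: $\rho\ge\lambda$ and $K_{d/2}$ is decreasing, so $f_{\mathbf{N}}\le f_{\mathbf{N}}(\mathbf{0})<\infty$, giving $h(\mathbf{N})\ge-\log\sup f_{\mathbf{N}}>-\infty$. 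Alternatively, I could invoke concavity of differential entropy for mixtures, which gives $h(\mathbf{N})\ge h(\mathbf{N}\mid T)$ directly.

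\textbf{Step 2: upper bound.} By Lemma~\ref{lem:variance}, $\mathbf{K}_{\mathbf{N}}=\frac{\lambda}{u}\mathbf{I}_p$. Among all densities on $\mathbb{R}^p$ with a given covariance, the Gaussian maximizes differential entropy, so
\[
h(\mathbf{N})\le\tfrac12\log\bigl((2\pi e)^p\det\mathbf{K}_{\mathbf{N}}\bigr)=\bar{h}(u).
\]

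\textbf{Step 3: the gap \eqref{eq:jensen_gap}.} The first two inequalities follow immediately from Steps 1 and 2. For the last one, I would compute $\bar{h}(u)-\underline{h}(u)$ explicitly. Using $\frac{\lambda}{u}=\sigma^2\mathbb{E}[T]$, which comes from $\mathbb{E}[T]=\lambda/\mu$ and $u=\mu/\sigma^2$,
\[
\bar{h}(u)=\tfrac{p}{2}\log(2\pi e\sigma^2)+\tfrac{p}{2}\log\mathbb{E}[T].
\]
Subtracting \eqref{eq:lower_bound_def} gives exactly $\frac{p}{2}(\log\mathbb{E}[T]-\mathbb{E}[\log T])$, so the final inequality in fact holds with equality. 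This quantity is nonnegative by Jensen's inequality applied to the concave logarithm, which confirms consistency.

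\textbf{Main obstacle.} The substantive points are the rigorous justifications rather than the algebra: that $h(\mathbf{N})$ exists (needed to write $h(\mathbf{N})=h(\mathbf{N}\mid T)+I(\mathbf{N};T)$), and that $\mathbb{E}[\log T]$ is finite. I would handle both as described in Step 1, using the boundedness of $f_{\mathbf{N}}$ and the tail behavior of the inverse-Gaussian density.
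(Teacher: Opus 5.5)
Your proposal is correct and follows the paper's proof essentially step for step. The lower bound comes from the conditional Gaussian entropy averaged over $T$ together with $I(\mathbf{N};T)\ge 0$, the upper bound from Gaussian maximum entropy with $\mathbf{K}_{\mathbf{N}}=\frac{\lambda}{u}\mathbf{I}_p$, and the identity $\lambda/u=\sigma^2\mathbb{E}[T]$ shows that the last inequality in \eqref{eq:jensen_gap} holds with equality and is nonnegative by Jensen. Your separate proof that $h(\mathbf{N})>-\infty$, via $f_{\mathbf{N}}\le f_{\mathbf{N}}(\mathbf{0})<\infty$, is a genuine improvement. The paper instead establishes finiteness of $h(\mathbf{N})$ only in the appendix, and it does so by appealing back to these same bounds, which is circular.
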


\begin{IEEEproof}

\textbf{1) Lower Bound (Conditional Entropy):} Recall the stochastic representation $\mathbf{N} \mid T=t \sim \mathcal{N}(\mathbf{0}, \sigma^2 t \mathbf{I}_p)$. 

Also recall that for a $p$-dimensional Gaussian random vector $\mathbf{X} \sim \mathcal{N}(\mathbf{0}, \Sigma)$ with covariance $\Sigma \succ 0$, the differential entropy is given by the closed-form expression
\begin{equation}
\label{eq:entropy_gaussian}
h(\mathbf{X}) = \frac{1}{2} \log\Big( (2\pi e)^p |\Sigma|\Big),
\end{equation}
where $|\cdot|$ denotes the determinant.
This entropy depends solely on the covariance matrix and is invariant under orthogonal transformations. In the isotropic case where $\Sigma = \alpha \mathbf{I}_p$, \eqref{eq:entropy_gaussian} simplifies to
\begin{equation}
h(\mathbf{X}) = \frac{p}{2} \log(2\pi e \alpha),\quad \alpha>0.
\end{equation}

Thus for our case, the entropy of this conditional Gaussian vector is:
\begin{equation}
h(\mathbf{N} | T=t) = \frac{p}{2} \log(2\pi e \sigma^2) + \frac{p}{2} \log t.
\end{equation}
Averaging over the distribution of $T$, we obtain the conditional entropy:
\begin{equation}
\begin{aligned}
h(\mathbf{N}|T) &= \int_0^\infty h(\mathbf{N}|T=t) f_T(t) \,\mathrm{d}t \\
&= \frac{p}{2} \log(2\pi e \sigma^2) + \frac{p}{2} \mathbb{E}[\log T].
\end{aligned}
\end{equation}
Since conditioning reduces entropy (i.e., $I(\mathbf{N};T) \ge 0$), it follows that $h(\mathbf{N}) \ge h(\mathbf{N}|T)$, establishing the lower bound $\underline{h}(u)$.

\textbf{2) Upper Bound (Max-Ent Principle):} From Lemma~\ref{lem:variance}, the covariance matrix of $\mathbf{N}$ is $\mathbf{K}_{\mathbf{N}} = \frac{\lambda}{u} \mathbf{I}_p$. Differential entropy is maximized by the Gaussian distribution for a fixed covariance:
\begin{equation}
\begin{aligned}
h(\mathbf{N}) &\le \frac{1}{2} \log \Big( (2\pi e)^p |\mathbf{K}_{\mathbf{N}}| \Big) \\
&= \frac{p}{2} \log\left( 2\pi e \frac{\lambda}{u} \right),
\end{aligned}
\end{equation}
which establishes the upper bound $\bar{h}(u)$.

\textbf{3) The Gap:} Combining \eqref{eq:lower_bound_def} and \eqref{eq:upper_bound_def}, the difference is:
\begin{equation}
\begin{aligned}
\bar{h}(u) - \underline{h}(u) &= \frac{p}{2} \left[ \log(\lambda/u) - \log(\sigma^2) - \mathbb{E}[\log T] \right] \\
&= \frac{p}{2} \left[ \log\left( \frac{\lambda}{u\sigma^2} \right) - \mathbb{E}[\log T] \right].
\end{aligned}
\end{equation}
Recalling $u = \mu/\sigma^2$, we have $\frac{\lambda}{u\sigma^2} = \frac{\lambda}{\mu} = \mathbb{E}[T]$. Thus:
\begin{equation}
\bar{h}(u) - \underline{h}(u) = \frac{p}{2} \left( \log \mathbb{E}[T] - \mathbb{E}[\log T] \right).
\end{equation}
By Jensen's inequality and the strict concavity of $\log(\cdot)$, $\log \mathbb{E}[T] \ge \mathbb{E}[\log T]$, confirming the gap is non-negative.
\end{IEEEproof}

\section{Capacity Scaling Law Under\ Second-Moment Constraint}\label{sec:scaling}Having established the statistical properties and the GVM structure of the NDFHL family in Sections \ref{sec:NDFHL_distribution} and \ref{sec:properties}, we are now positioned to quantify the ultimate information transfer capability supported by this transport mechanism. Specifically, we leverage the finite second moment (Lemma~\ref{lem:variance}) and the entropy bounds (Lemma~\ref{lem:entropy_bounds}) to derive the capacity scaling law of the boundary-induced additive noise channel
\begin{equation}\label{eq:V_channel_display}\mathbf{Y} = \mathbf{X} + \mathbf{N},\end{equation}where the noise vector $\mathbf{N} \in \mathbb{R}^{p}$ follows the $\mathrm{NDFHL}^{(d)}(\lambda, u)$ distribution. Consistent with the previous sections, we denote the effective dimension of the receiving hyperplane as $p = d-1$. Throughout, the input is subject to an average second-moment constraint $\mathbb{E}\|\mathbf{X}\|^2 \le P$. Our analysis focuses on the asymptotic regime $P \to \infty$, where we derive upper and lower bounds with matching pre-log factors to characterize the channel's degrees of freedom.

\subsection{Channel Capacity Formulation}

The channel capacity under a quadratic cost constraint is defined as:
\begin{equation}
\label{eq:capacity_def}
C(P) \triangleq \sup_{\substack{P_{\mathbf{X}}:~\mathbb{E}\|\mathbf{X}\|^2 \le P}} I(\mathbf{X}; \mathbf{X}+\mathbf{N}),
\end{equation}
where $\mathbf{X} \in \mathbb{R}^{p}$ is the input vector independent of the noise $\mathbf{N}$.
We assume the normalized drift speed satisfies $u > 0$. Under this condition, the mixing variable $T \sim IG(\lambda/u\sigma^2, \lambda^2/\sigma^2)$ admits finite logarithmic moments, ensuring that the differential entropy $h(\mathbf{N})$ is finite (see Appendix~\ref{app:integrability} for details).

Note that the singular case $u=0$, where the NDFHL distribution degenerates to a multivariate Cauchy distribution with infinite variance, falls outside the second-moment constraint framework. While our primary capacity scaling theorems focus on the finite-power regime ($u>0$), we explore the extension of our results towards this singular limit in Section \ref{sec:singular_limit}.

\subsection{Upper Bound}

We begin by establishing a universal upper bound based on the maximum entropy principle \cite{cover1999elements}. For any input $\mathbf{X}$ independent of $\mathbf{N}$, the mutual information is:
\begin{equation}
I(\mathbf{X}; \mathbf{Y}) = h(\mathbf{Y}) - h(\mathbf{N}).
\end{equation}
Recall from Proposition~\ref{prop:mean_zero} in Section \ref{sec:NDFHL_distribution} that the $\mathrm{NDFHL}^{(d)}$ noise is zero-mean, i.e., $\mathbb{E}[\mathbf{N}] = \mathbf{0}$. Consequently, the output second moment satisfies:
\begin{equation}
\mathbb{E}\|\mathbf{Y}\|^2 = \mathbb{E}\|\mathbf{X}\|^2 + \mathbb{E}\|\mathbf{N}\|^2 \le P + p \, \sigma_N^2,
\end{equation}
where $\sigma_N^2 = \lambda/u$ is the component-wise variance of the noise derived in Lemma~\ref{lem:variance}.

Among all random vectors with a fixed covariance trace, the \textit{spherical Gaussian distribution} maximizes differential entropy \cite{cover1999elements}. Thus, we can bound $h(\mathbf{Y})$ as:
\begin{align}
C(P) &= \sup_{P_{\mathbf{X}}} \{ h(\mathbf{Y}) - h(\mathbf{N}) \} \nonumber \\
&\le \frac{p}{2} \log\left( 2\pi e \frac{\mathbb{E}\|\mathbf{Y}\|^2}{p} \right) - h(\mathbf{N}) \nonumber \\
&\le \frac{p}{2} \log\left( 2\pi e \left( \frac{P}{p} + \sigma_N^2 \right) \right) - h(\mathbf{N}) \nonumber \\
&= \frac{p}{2} \log P + \frac{p}{2} \log \left( \frac{2\pi e}{p} + \frac{2\pi e \sigma_N^2}{P} \right) - h(\mathbf{N}). \label{eq:upper_bound_steps}
\end{align}
As $P \to \infty$, the second term in \eqref{eq:upper_bound_steps} converges to a constant. Therefore, we obtain the leading-order scaling:
\begin{equation}
\label{eq:upper_scaling}
C(P) \le \frac{p}{2}\log P + O(1), \qquad P \to \infty.
\end{equation}
This establishes that the pre-log factor (i.e., degrees of freedom) cannot exceed $p/2$, determined solely by the dimension of the receiving boundary.

\subsection{Achievable Lower Bound via GVM Conditioning}

To obtain a leading-order matching lower bound, we exploit the GVM structure of the noise established in Section \ref{sec:stochastic_foundation}. Specifically, $\mathbf{N} \stackrel{d}{=} \sigma \sqrt{T} \mathbf{Z}$, where $\mathbf{Z} \sim \mathcal{N}(\mathbf{0}, \mathbf{I}_p)$ and $T$ is an independent IG mixing variable.

The following lemma establishes a uniform bound on the \textit{conditioning gap} induced by a mixing variable $V>0$. This result serves as the key technical device to decouple the asymptotic capacity scaling from the latent temporal uncertainty, effectively isolating the geometric degrees of freedom from the semi-heavy-tailed noise structure.

\begin{lemma}[Uniform Bound on the Conditioning Gap]
\label{lem:conditioning_gap_mixing}
Consider the additive noise channel $\mathbf{Y} = \mathbf{X} + \mathbf{N}$ where $\mathbf{N} \stackrel{d}{=} \sqrt{V}\, \mathbf{Z}
$ is a scale mixture of Gaussian vectors ($\mathbf{Z} \sim \mathcal{N}(\mathbf{0}, \mathbf{I}_p)$ and $V > 0$ is independent of $\mathbf{Z}$ and $\mathbf{X}$). Then, for any input distribution $P_{\mathbf{X}}$, the conditioning gap is bounded by:
\begin{equation}
\label{eq:conditioning_gap_uniform_bound}
0 \le I(\mathbf{X}; \mathbf{Y}|V) - I(\mathbf{X}; \mathbf{Y}) \le I(V; \sqrt{V}\mathbf{Z}).
\end{equation}
Crucially, the upper bound $I(V; \sqrt{V}\mathbf{Z})$ depends solely on the distribution of the mixing variable $V$ and is independent of the input power $P$.
\end{lemma}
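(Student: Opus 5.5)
The plan is to express the conditioning gap as a single conditional mutual information using the chain rule. I will then bound that term by a mutual information that no longer involves the input.

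\textbf{Step 1 (identify the gap).} Expand $I(\mathbf{X};\mathbf{Y},V)$ in two ways:
\begin{equation*}
I(\mathbf{X};\mathbf{Y},V) = I(\mathbf{X};V) + I(\mathbf{X};\mathbf{Y}\mid V) = I(\mathbf{X};\mathbf{Y}) + I(\mathbf{X};V\mid \mathbf{Y}).
\end{equation*}
Since $V$ is independent of $\mathbf{X}$, we have $I(\mathbf{X};V)=0$. Hence
\begin{equation*}
I(\mathbf{X};\mathbf{Y}\mid V) - I(\mathbf{X};\mathbf{Y}) = I(\mathbf{X};V\mid \mathbf{Y}) \ge 0 .
\end{equation*}
This already gives the lower inequality in \eqref{eq:conditioning_gap_uniform_bound}.

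\textbf{Step 2 (upper bound).} Expand $I(V;\mathbf{X},\mathbf{Y})$ in two ways as well. First,
\begin{equation*}
I(V;\mathbf{X},\mathbf{Y}) = I(V;\mathbf{Y}) + I(V;\mathbf{X}\mid\mathbf{Y}) \ge I(V;\mathbf{X}\mid \mathbf{Y}).
\end{equation*}
Second,
\begin{equation*}
I(V;\mathbf{X},\mathbf{Y}) = I(V;\mathbf{X}) + I(V;\mathbf{Y}\mid\mathbf{X}) = I(V;\mathbf{Y}\mid \mathbf{X}).
\end{equation*}
Given $\mathbf{X}=\mathbf{x}$, the map $\mathbf{y}\mapsto \mathbf{y}-\mathbf{x}$ is a bijection. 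Moreover, the pair $(V,\mathbf{N})=(V,\sqrt{V}\mathbf{Z})$ is independent of $\mathbf{X}$. Therefore $I(V;\mathbf{Y}\mid\mathbf{X}) = I(V;\mathbf{N}\mid \mathbf{X}) = I(V;\sqrt{V}\mathbf{Z})$. Combining the two expansions gives the upper inequality. The bound clearly depends only on the law of $V$, so it is uniform in $P_{\mathbf{X}}$ and in particular in $P$.

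\textbf{Main obstacle.} The only delicate point is rigor when some of these quantities might be infinite or be differences of differential entropies. I would avoid this by working throughout with mutual informations in the general (Kolmogorov/Dobrushin) sense, for which the chain rule and the invariance under the bijection $\mathbf{y}\mapsto\mathbf{y}-\mathbf{x}$ hold unconditionally.

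The subtraction in Step 1 then requires $I(\mathbf{X};\mathbf{Y})<\infty$. For finite-power inputs this holds by the max-entropy bound \eqref{eq:upper_bound_steps}, and if it failed the inequality could simply be stated in the additive form $I(\mathbf{X};\mathbf{Y}\mid V)\le I(\mathbf{X};\mathbf{Y})+I(V;\sqrt V\mathbf{Z})$. For the NDFHL case $V=\sigma^2T$, finiteness of $I(V;\sqrt V\mathbf{Z})=I(\mathbf{N};T)$ follows from \eqref{eq:jensen_gap} in Lemma~\ref{lem:entropy_bounds}.
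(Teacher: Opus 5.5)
Your proposal is correct and follows essentially the same route as the paper. Both expand $I(\mathbf{X};\mathbf{Y},V)$ two ways to identify the gap as $I(\mathbf{X};V\mid\mathbf{Y})$, bound it by $I(V;\mathbf{X},\mathbf{Y})=I(V;\mathbf{Y}\mid\mathbf{X})$, and reduce to $I(V;\sqrt{V}\mathbf{Z})$ by translation invariance. Your added remarks, on the joint independence of $(V,\mathbf{N})$ from $\mathbf{X}$ and on finiteness of $I(\mathbf{X};\mathbf{Y})$, make explicit assumptions the paper leaves implicit.
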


\begin{IEEEproof}
By the chain rule for mutual information, we can expand $I(\mathbf{X}; \mathbf{Y}, V)$ in two ways:
\begin{equation}
I(\mathbf{X}; \mathbf{Y}, V) = I(\mathbf{X}; \mathbf{Y}) + I(\mathbf{X}; V|\mathbf{Y}) = I(\mathbf{X}; \mathbf{Y}|V) + I(\mathbf{X}; V).
\end{equation}
Since the mixing variable $V$ is independent of the input $\mathbf{X}$, we have $I(\mathbf{X}; V) = 0$. This yields the identity:
\begin{equation}
I(\mathbf{X}; \mathbf{Y}|V) - I(\mathbf{X}; \mathbf{Y}) = I(\mathbf{X}; V|\mathbf{Y}).
\end{equation}
To bound this gap, we observe:
\begin{align}
I(\mathbf{X}; V|\mathbf{Y}) &\le I(\mathbf{X}, \mathbf{Y}; V) \nonumber \\
&= I(\mathbf{X}; V) + I(\mathbf{Y}; V|\mathbf{X}) \nonumber \\
&= I(\mathbf{Y}; V|\mathbf{X}).
\end{align}
Given $\mathbf{X}=\mathbf{x}$, the output is $\mathbf{Y} = \mathbf{x} + \sqrt{V}\mathbf{Z}$. By translation invariance, $I(\mathbf{x}+\sqrt{V}\mathbf{Z}; V) = I(\sqrt{V}\mathbf{Z}; V)$. 
Averaging over the distribution of $\mathbf X$ yields
\begin{equation}
\label{eq:conditional_translation_invariance}
I(\mathbf Y;V | \mathbf X)
=
\mathbb E_{\mathbf X}\!\left[
I(\mathbf Y;V | \mathbf X=\mathbf x)
\right]
=
I(\sqrt{V}\mathbf Z;V).
\end{equation}
Thus we have
\begin{equation}
I(\mathbf{X}; V|\mathbf{Y}) \le I(\sqrt{V}\mathbf{Z}; V).
\end{equation}
This bound is intrinsic to the noise structure and invariant to $P$, concluding the proof.
\end{IEEEproof}

We now apply this lemma to the NDFHL channel, where the mixing variance is $V = \sigma^2 T$.


\begin{lemma}[Achievable Scaling with Gaussian Input]
\label{lem:GVM_lower}
Let $\mathbf{N} \sim \mathrm{NDFHL}^{(d)}(\lambda, u)$ with $u > 0$. For a Gaussian input distribution
\begin{equation}
\label{eq:Gaussian_input_def}
\mathbf{X} \sim \mathcal{N}\left(\mathbf{0}, \frac{P}{p} \mathbf{I}_{p}\right),
\end{equation}
the mutual information satisfies:
\begin{equation}
\label{eq:lower_scaling}
I(\mathbf{X}; \mathbf{Y}) \ge \frac{p}{2}\log P + O(1), \qquad P \to \infty.
\end{equation}
\end{lemma}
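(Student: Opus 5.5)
The plan is to combine Lemma~\ref{lem:conditioning_gap_mixing} with the conditionally Gaussian structure of the noise. Set $V=\sigma^2 T$, so that $\mathbf N \stackrel{d}{=}\sqrt V\,\mathbf Z$ with $V$ independent of $\mathbf Z$ and of the Gaussian input $\mathbf X$. Lemma~\ref{lem:conditioning_gap_mixing} then gives
\begin{equation*}
I(\mathbf X;\mathbf Y)\ge I(\mathbf X;\mathbf Y\mid V)-I(V;\sqrt V\mathbf Z).
\end{equation*}
This reduces the problem to two tasks. First, evaluate the conditional mutual information, which is a Gaussian channel for each fixed $V=v$. Second, show that the penalty $I(V;\sqrt V\mathbf Z)$ is a finite constant that does not depend on $P$.

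For the first term, condition on $V=v$. The channel is then $\mathbf Y=\mathbf X+\sqrt v\,\mathbf Z$ with input $\mathbf X\sim\mathcal N(\mathbf 0,\frac{P}{p}\mathbf I_p)$, so
\begin{equation*}
I(\mathbf X;\mathbf Y\mid V=v)=\tfrac p2\log\!\bigl(1+\tfrac{P}{pv}\bigr)\ge \tfrac p2\log P-\tfrac p2\log p-\tfrac p2\log v .
\end{equation*}
Averaging over $V$ yields
\begin{equation*}
I(\mathbf X;\mathbf Y\mid V)\ge \tfrac p2\log P-\tfrac p2\log p-\tfrac p2\log\sigma^2-\tfrac p2\E[\log T].
\end{equation*}
This bound is useful only if $\E[\log T]$ is finite. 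That holds because the IG law has all positive moments, which controls $\E[\log T]$ from above. The density $f_T(t)$ also decays like $\exp(-\kappa/(2t))$ as $t\to0$, so $\E[|\log T|]<\infty$. This is the integrability fact cited from Appendix~\ref{app:integrability}.

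For the penalty term, write $I(V;\sqrt V\mathbf Z)=h(\mathbf N)-h(\mathbf N\mid T)$. Lemma~\ref{lem:entropy_bounds} bounds this by
\begin{equation*}
I(V;\sqrt V\mathbf Z)\le \bar h(u)-\underline h(u)=\tfrac p2\bigl(\log\E[T]-\E[\log T]\bigr),
\end{equation*}
which is finite by the log-moment bound above and independent of $P$. Combining the two estimates gives
\begin{equation*}
I(\mathbf X;\mathbf Y)\ge\tfrac p2\log P-\tfrac p2\log p-\tfrac p2\log(\sigma^2\E[T])=\tfrac p2\log P-\tfrac p2\log\!\bigl(p\lambda/u\bigr),
\end{equation*}
where the $\E[\log T]$ terms cancel. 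This is $\tfrac p2\log P+O(1)$, as claimed.

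The main obstacle is verifying the finiteness of $\E[\log T]$ rigorously. Concretely, the two tails of the IG density must be handled separately: exponential decay at infinity and the $\exp(-\kappa/2t)$ factor at zero. Once that is done, the remaining steps are routine. A simpler alternative route is also available: bound $h(\mathbf Y)\ge h(\mathbf X)$ and use $h(\mathbf N)\le\bar h(u)$ directly. This gives the same constant, and I would mention it as a sanity check.
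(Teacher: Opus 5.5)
Your proposal is correct and follows essentially the same route as the paper: the conditioning-gap lemma with $V=\sigma^2T$, the per-$t$ Gaussian formula $\frac p2\log\bigl(1+\frac{P}{p\sigma^2t}\bigr)$, finiteness of $\E[\log T]$, and the Jensen-gap bound on $I(T;\mathbf N)$ (the paper's Lemma~\ref{lem:finite_MI}). Your two deviations are sound and slightly cleaner. Using $\log(1+x)\ge\log x$ removes the paper's dominated-convergence step, and substituting the Jensen bound explicitly gives the non-asymptotic bound $I(\mathbf X;\mathbf Y)\ge\frac p2\log\frac{uP}{p\lambda}$, which your $h(\mathbf Y)\ge h(\mathbf X)$ sanity check reproduces without invoking $\E[\log T]$ at all.
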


\begin{IEEEproof}

First we note that the isotropic Gaussian input $\mathbf{X}$ defined in \eqref{eq:Gaussian_input_def} satisfies the second-moment constraint with equality. Specifically, the total average power is given by $\mathbb{E}\|\mathbf{X}\|^2 = \operatorname{tr}(\frac{P}{p} \mathbf{I}_p) = p \cdot \frac{P}{p} = P$ . This ensures that the proposed input distribution remains feasible for the capacity optimization problem \eqref{eq:capacity_def} for any $P > 0$.

Applying Lemma \ref{lem:conditioning_gap_mixing} with the specific NDFHL mixing variable $V = \sigma^2 T$, we obtain the following decomposition:
\begin{equation}
    I(\mathbf{X}; \mathbf{Y}) \ge I(\mathbf{X}; \mathbf{Y} \mid T) - I(T; \sigma\sqrt{T}\mathbf{Z}).
    \label{eq:lower_bound_expansion}
\end{equation}
The gap term $I(T; \sigma\sqrt{T}\mathbf{Z})$ is independent of the input power $P$ and depends solely on the channel parameters. By Lemma \ref{lem:finite_MI} in Appendix~\ref{app:integrability}, the condition $u > 0$ ensures that $\mathbb{E}[T]$ and $\mathbb{E}[|\log T|]$ are finite, which in turn guarantees that $I(T; \sigma\sqrt{T}\mathbf{Z}) < \infty$. Consequently, this second term remains a constant $O(1)$ penalty as $P \to \infty$. 
We now focus on the first term. 

\emph{Recall:} For an additive Gaussian channel $\mathbf Y=\mathbf X+\mathbf N$ with
$\mathbf X\sim\mathcal N(\mathbf 0,\Sigma_X)$ independent of
$\mathbf N\sim\mathcal N(\mathbf 0,\Sigma_N)$, the output is Gaussian with
$\mathbf Y\sim\mathcal N(\mathbf 0,\Sigma_X+\Sigma_N)$, and
\begin{equation}
I(\mathbf X;\mathbf Y)=h(\mathbf Y)-h(\mathbf N)
=\frac{1}{2}\log\frac{|\Sigma_X+\Sigma_N|}{|\Sigma_N|}.
\end{equation}
Conditioning on $T=t$, we have $\Sigma_X=\frac{P}{p}\mathbf{I}_p$ and
$\Sigma_N=\sigma^2 t\,\mathbf{I}_p$, hence
\begin{align}
\begin{split}
I(\mathbf X;\mathbf Y\mid T=t)
&=\frac{1}{2}\log\frac{\left|\left(\frac{P}{p}+\sigma^2 t\right)\mathbf{I}_p\right|}
{\left|\sigma^2 t\,\mathbf{I}_p\right|}\\
&=\frac{1}{2}\log\frac{\left(\frac{P}{p}+\sigma^2 t\right)^p}{(\sigma^2 t)^p}
=\frac{p}{2}\log\!\left(1+\frac{P/p}{\sigma^2 t}\right).
\end{split}
\end{align}

Averaging over $T$, we obtain
\begin{align}
I(\mathbf{X}; \mathbf{Y} \mid T) &= \mathbb{E}\left[ \frac{p}{2} \log\left(1 + \frac{P}{p \sigma^2 T}\right) \right] \nonumber \\
&= \frac{p}{2} \log P - \frac{p}{2} \log(p\sigma^2) - \frac{p}{2} \mathbb{E}[\log T] \nonumber \\
&\quad + \frac{p}{2} \mathbb{E}\left[ \log\left(1 + \frac{p \sigma^2 T}{P}\right) \right]. \label{eq:expansion}
\end{align}
For $u > 0$, $T$ follows a proper Inverse Gaussian distribution, implying that $\mathbb{E}[\log T]$ is finite (see Appendix~\ref{app:integrability}). 
Regarding the last term in \eqref{eq:expansion}, we rigorously justify its convergence to zero as $P \to \infty$ via the Dominated Convergence Theorem (DCT). 

First, for any fixed realization of the hitting time $T=t$, the term $p \sigma^2 t / P$ vanishes as $P$ increases, implying the pointwise convergence of the integrand:
\begin{equation}
\lim_{P \to \infty} \frac{p}{2} \log\left(1 + \frac{p \sigma^2 t}{P}\right) = \frac{p}{2} \log(1) = 0.
\end{equation}
Second, to apply the DCT, we establish an integrable dominating function. Using the inequality $\log(1+x) \le x$ for $x \ge 0$, the integrand is bounded for any $P \ge P_{\min} > 0$ as:
\begin{equation}
\left| \frac{p}{2} \log\left(1 + \frac{p \sigma^2 T}{P}\right) \right| \le \frac{p^2 \sigma^2}{2 P} T \le \frac{p^2 \sigma^2}{2 P_{\min}} T \triangleq h(T).
\end{equation}
As shown in Appendix \ref{app:integrability}, the first moment $\mathbb{E}[T] = \lambda/(u\sigma^2)$ is finite for $u > 0$. Thus, $h(T)$ is an integrable function since $\mathbb{E}[h(T)] = \frac{p^2 \lambda}{2 u P_{\min}} < \infty$. By the DCT, we can exchange the limit and the expectation:
\begin{align}
\begin{split}
\lim_{P \to \infty} &\mathbb{E}\left[ \frac{p}{2} \log\left(1 + \frac{p \sigma^2 T}{P}\right) \right] \\
=~&\mathbb{E}\left[ \lim_{P \to \infty} \frac{p}{2} \log\left(1 + \frac{p \sigma^2 T}{P}\right) \right] = 0.
\end{split}
\end{align}
Consequently, the conditional mutual information scales as $I(\mathbf{X}; \mathbf{Y} | T) = \frac{p}{2} \log P + O(1)$, where the $O(1)$ terms are independent of $P$. Combining this with \eqref{eq:lower_bound_expansion} completes the proof.

\end{IEEEproof}

\subsection{Main Capacity Scaling Result}

Combining the upper bound \eqref{eq:upper_scaling} and the achievable lower bound from Lemma \ref{lem:GVM_lower}, we can now state our main scaling result.

\begin{theorem}[Capacity Scaling of the NDFHL Channel]
\label{thm:main_scaling}
Consider the additive noise channel $\mathbf{Y} = \mathbf{X} + \mathbf{N}$ where the noise $\mathbf{N} \sim \mathrm{NDFHL}^{(d)}(\lambda, u)$ has normalized normal drift $u > 0$ and dimension $p = d-1$. Under the average power constraint $\mathbb{E}\|\mathbf{X}\|^2 \le P$, the channel capacity scales as:
\begin{equation}
\label{eq:final_scaling_result}
C(P) = \frac{p}{2} \log P + O(1), \quad \text{as } P \to \infty.
\end{equation}
\end{theorem}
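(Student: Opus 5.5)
The plan is to sandwich $C(P)$ between the Max-Ent upper bound \eqref{eq:upper_bound_steps} and the Gaussian-input lower bound of Lemma~\ref{lem:GVM_lower}, and to check that the $O(1)$ remainders on both sides are finite constants that do not depend on $P$.

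\textbf{Upper side.} From \eqref{eq:upper_bound_steps},
\begin{equation*}
C(P) \le \frac{p}{2}\log P + \frac{p}{2}\log\left(\frac{2\pi e}{p} + \frac{2\pi e\lambda}{uP}\right) - h(\mathbf{N}).
\end{equation*}
For $P \ge 1$ the middle term is bounded above by $\frac{p}{2}\log\bigl(2\pi e(1/p + \lambda/u)\bigr)$. The only remaining point is that $-h(\mathbf{N})$ is a finite constant. Lemma~\ref{lem:entropy_bounds} gives $h(\mathbf{N}) \ge \underline{h}(u) = \frac{p}{2}\log(2\pi e\sigma^2) + \frac{p}{2}\mathbb{E}[\log T]$. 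This is finite because $\mathbb{E}[|\log T|] < \infty$ for the inverse-Gaussian $T$ with $u>0$ (Appendix~\ref{app:integrability}). Hence $C(P) - \frac{p}{2}\log P$ is bounded above uniformly in $P \ge 1$.

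\textbf{Lower side.} The isotropic Gaussian input \eqref{eq:Gaussian_input_def} is feasible, so $C(P) \ge I(\mathbf{X};\mathbf{Y})$. Combining \eqref{eq:lower_bound_expansion} with \eqref{eq:expansion} gives
\begin{equation*}
C(P) \ge \frac{p}{2}\log P - \frac{p}{2}\log(p\sigma^2) - \frac{p}{2}\mathbb{E}[\log T] - I(T;\sigma\sqrt{T}\mathbf{Z}) + \frac{p}{2}\mathbb{E}\left[\log\left(1+\frac{p\sigma^2T}{P}\right)\right].
\end{equation*}
The last term is nonnegative, so it can simply be dropped; the DCT argument is not even needed for a bounded remainder. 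What is left is a constant. It is finite because $\mathbb{E}[\log T]$ is finite and because $I(T;\sigma\sqrt{T}\mathbf{Z}) < \infty$ by Lemma~\ref{lem:finite_MI}. Equivalently, $I(T;\sigma\sqrt{T}\mathbf{Z}) = h(\mathbf{N}) - \underline{h}(u)$, which is at most $\frac{p}{2}(\log\mathbb{E}[T] - \mathbb{E}[\log T])$ by \eqref{eq:jensen_gap}. Hence $C(P) - \frac{p}{2}\log P$ is bounded below uniformly in $P$.

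\textbf{Conclusion.} Together the two sides show $|C(P) - \frac{p}{2}\log P|$ stays bounded as $P\to\infty$, which is \eqref{eq:final_scaling_result}.

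The main obstacle is confirming that both offsets are genuinely finite, that is, that $h(\mathbf{N})$ is finite from both sides and $I(T;\sigma\sqrt{T}\mathbf{Z})<\infty$. Both reduce to integrability of $\log T$ and of $T$ under the IG law. That integrability holds because the IG density has an $e^{-\kappa/(2\nu^2 t)\cdot\text{const}}$-type behaviour near $0$ and an exponential tail (rate $\kappa/(2\nu^2)$) at $\infty$, as recorded in the appendix. Everything else is assembling the two already-derived bounds.
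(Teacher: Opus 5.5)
Your proof is correct and follows the paper's route: it combines the max-entropy upper bound \eqref{eq:upper_bound_steps} with the Gaussian-input lower bound of Lemma~\ref{lem:GVM_lower} (via the conditioning-gap Lemma~\ref{lem:conditioning_gap_mixing}), and Appendix~\ref{app:integrability} supplies finiteness of $h(\mathbf{N})$, $\mathbb{E}[\log T]$ and $I(T;\mathbf{N})$. Dropping the nonnegative term $\frac{p}{2}\mathbb{E}[\log(1+p\sigma^2T/P)]$ is a valid shortcut for the $O(1)$ claim (the DCT is only needed for the $o(1)$ refinement in Theorem~\ref{thm:refined_scaling}); the only slip is cosmetic, since the IG density near $0$ behaves like $t^{-3/2}e^{-\kappa/(2t)}$, with no $\nu^2$ in that exponent.
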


\begin{IEEEproof}
The universal (i.e., among all input distributions) upper bound is given by \eqref{eq:upper_scaling}. The lower bound is achieved by the isotropic Gaussian input defined in Lemma \ref{lem:GVM_lower}. Since the pre-log factors match, the theorem is proved.
\end{IEEEproof}


\subsection{Refining Asymptotic Bounds}
\label{subsec:refined_scaling}While Theorem \ref{thm:main_scaling} establishes the pre-log scaling factor, the methodology developed in the previous subsections allows for a more granular characterization of the capacity. Specifically, by carefully collecting the non-vanishing terms from the upper and lower bound derivations, we obtain a refined expansion of the high-SNR offset.

\begin{theorem}[Asymptotic Capacity Exactness]\label{thm:refined_scaling}For the NDFHL additive noise channel with $u > 0$, the capacity $C(P)$ admits the following refined expansion as $P \to \infty$:\begin{equation}\label{eq:refined_scaling_result}C(P) = \frac{p}{2} \log P + c^* + o(1),\end{equation}where the constant offset $c^*$ is uniquely given by:\begin{equation}\label{eq:exact_offset}c^* = \frac{p}{2} \log\left( \frac{2\pi e}{p} \right) - h(\mathbf{N}).\end{equation}
\end{theorem}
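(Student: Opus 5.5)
The plan is to sandwich $C(P)$ between two explicit bounds whose difference vanishes as $P\to\infty$. Both bounds share the term $-h(\mathbf{N})$, so it cancels in the gap. The conditioning-gap route of Lemma~\ref{lem:conditioning_gap_mixing} is too lossy here: it leaves the constant penalty $I(T;\sigma\sqrt{T}\mathbf{Z})$. Instead I would take the lower bound directly from the entropy-power structure of the isotropic Gaussian input.

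\emph{Step 1 (finiteness of $h(\mathbf{N})$).} By Lemma~\ref{lem:entropy_bounds}, $\underline{h}(u)\le h(\mathbf{N})\le \bar{h}(u)$. Here $\bar h(u)=\frac{p}{2}\log(2\pi e\lambda/u)<\infty$ for $u>0$. Also $\underline h(u)=\frac p2\log(2\pi e\sigma^2)+\frac p2\mathbb{E}[\log T]>-\infty$, because $\mathbb{E}[|\log T|]<\infty$ for the inverse Gaussian law (Appendix~\ref{app:integrability}). Hence $h(\mathbf{N})$ is a finite real number. This makes $c^*$ well defined and legitimizes writing $I(\mathbf{X};\mathbf{Y})=h(\mathbf{Y})-h(\mathbf{N})$ for every admissible input.

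\emph{Step 2 (upper bound).} Reuse \eqref{eq:upper_bound_steps} verbatim:
\begin{equation}
C(P)\le \frac p2\log\Big(2\pi e\big(\tfrac Pp+\tfrac{\lambda}{u}\big)\Big)-h(\mathbf{N})
=\frac p2\log P+c^*+\frac p2\log\Big(1+\frac{p\lambda}{uP}\Big).
\end{equation}
The last term is $o(1)$.

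\emph{Step 3 (lower bound).} Take $\mathbf{X}\sim\mathcal N(\mathbf 0,\frac Pp\mathbf I_p)$, as in Lemma~\ref{lem:GVM_lower}. Since $\mathbf{N}$ is independent of $\mathbf{X}$, conditioning reduces entropy and translation invariance gives
\begin{equation}
h(\mathbf{X}+\mathbf{N})\ge h(\mathbf{X}+\mathbf{N}\mid \mathbf{N})=h(\mathbf{X})=\frac p2\log\Big(\frac{2\pi e P}{p}\Big).
\end{equation}
Therefore
\begin{equation}
C(P)\ge I(\mathbf{X};\mathbf{Y})=h(\mathbf{Y})-h(\mathbf{N})\ge \frac p2\log P+c^*,
\end{equation}
with no error term at all.

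\emph{Step 4 (conclusion and uniqueness).} Steps 2 and 3 combine to
\begin{equation}
0\le C(P)-\frac p2\log P-c^*\le \frac p2\log\Big(1+\frac{p\lambda}{uP}\Big)\to 0.
\end{equation}
This yields \eqref{eq:refined_scaling_result}, and shows that isotropic Gaussian signaling attains the offset. Uniqueness of $c^*$ is immediate: if $\frac p2\log P+c+o(1)=\frac p2\log P+c'+o(1)$, then $c=c'$.

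The only genuinely delicate point is Step 1, namely that $h(\mathbf{N})$ is finite rather than $-\infty$. I would discharge it through the lower bound $\underline h(u)$ together with the integrability of $\log T$ under the IG law, whose density decays exponentially at $t\to\infty$ and like $t^{-3/2}e^{-\kappa/(2t)}$ at $t\to 0$. Everything else is the standard max-entropy versus entropy-power sandwich.
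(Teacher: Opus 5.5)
Your proof is correct. Your upper bound is exactly the paper's; your lower bound takes a different route.

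The paper lower-bounds the Gaussian-input rate through Lemma~\ref{lem:conditioning_gap_mixing}, $I(\mathbf X;\mathbf Y)\ge I(\mathbf X;\mathbf Y\mid T)-I(T;\mathbf N)$. It then evaluates $I(\mathbf X;\mathbf Y\mid T)$ as an average of conditionally Gaussian channels and removes the term $\frac p2\mathbb E[\log(1+p\sigma^2T/P)]$ by dominated convergence. Finally it closes the gap with the identity $h(\mathbf N)=h(\mathbf N\mid T)+I(\mathbf N;T)$.

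Your stated reason for avoiding that route is therefore mistaken, though harmless to your argument. The penalty $I(T;\mathbf N)$ costs nothing at the constant level: $-\frac p2\log(p\sigma^2)-\frac p2\mathbb E[\log T]-I(T;\mathbf N)$ equals $c^*$ exactly. It is precisely the term that converts $h(\mathbf N\mid T)$ into $h(\mathbf N)$.

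Your route, $h(\mathbf X+\mathbf N)\ge h(\mathbf X+\mathbf N\mid\mathbf N)=h(\mathbf X)$, is shorter and gives more, in two ways:
\begin{itemize}
\item It yields the non-asymptotic sandwich $0\le C(P)-\frac p2\log P-c^*\le\frac p2\log\bigl(1+\frac{p\lambda}{uP}\bigr)$, valid for every $P>0$, with no limit exchange needed.
\item It uses only $\mathbb E\|\mathbf N\|^2<\infty$ (plus zero mean, or a covariance-trace version of the max-entropy bound) and $|h(\mathbf N)|<\infty$. This shows that the vanishing gap and the entropy-only offset hold for any additive noise with finite second moment and finite differential entropy. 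They are not consequences of the Gaussian variance-mixture or first-hitting geometry.
\end{itemize}

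What the paper's route offers in exchange is an interpretable split of the offset into a conditionally Gaussian part driven by $\mathbb E[\log T]$ and a mixing penalty $I(T;\mathbf N)$. In your argument the mixture structure enters only in Step~1, where it certifies $h(\mathbf N)>-\infty$ via $h(\mathbf N)\ge h(\mathbf N\mid T)$.
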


\begin{IEEEproof}
We establish the exactness of \eqref{eq:exact_offset} by showing that the universal upper bound and the Gaussian-input lower bound converge to the same constant.

\textit{1) Upper Bound:} From the maximum entropy principle discussed in \eqref{eq:upper_bound_steps}, the capacity is bounded by $C(P) \le \frac{p}{2} \log(2\pi e (\frac{P}{p} + \sigma_N^2)) - h(\mathbf{N})$. Expanding for large $P$:
\begin{equation}
C(P) \le \frac{p}{2} \log P + \frac{p}{2} \log ( 2\pi e / p ) - h(\mathbf{N}) + O(P^{-1}),
\end{equation}
identifying $c_{\mathrm{UB}} = \frac{p}{2} \log ( 2\pi e/p ) - h(\mathbf{N})$.

\textit{2) Lower Bound:} Applying the GVM decomposition from Lemma \ref{lem:GVM_lower} and the expansion in \eqref{eq:expansion}, the achievable rate satisfies:
\begin{equation}
\label{eq:lower_bound_slim}
\begin{aligned}
I(\mathbf{X}; \mathbf{Y}) \ge \frac{p}{2} \log P &- \frac{p}{2} \log(p \sigma^2) \\ 
&- \frac{p}{2}\mathbb{E}[\log T] - I(T; \mathbf{N}) + o(1),
\end{aligned}
\end{equation}
identifying $c_{\mathrm{LB}} = \frac{p}{2} \log ( 1/(p\sigma^2) ) - \frac{p}{2} \mathbb{E}[\log T] - I(T; \mathbf{N})$.

\textit{3) Convergence:} To see $c_{\mathrm{LB}} = c_{\mathrm{UB}}$, recall $h(\mathbf{N}) = h(\mathbf{N} \mid T) + I(\mathbf{N}; T)$. Substituting $h(\mathbf{N} \mid T) = \frac{p}{2} \log(2\pi e \sigma^2) + \frac{p}{2} \mathbb{E}[\log T]$ into $c_{\mathrm{UB}}$:
\begin{equation}
\begin{aligned}
c_{\mathrm{UB}} &= \frac{p}{2} \log( 2\pi e / p ) - \bigl[ \tfrac{p}{2} \log(2\pi e \sigma^2) + \tfrac{p}{2} \mathbb{E}[\log T] + I(\mathbf{N}; T) \bigr] \\
&= \frac{p}{2} \bigl[ \log( 2\pi e / p ) - \log(2\pi e \sigma^2) \bigr] - \tfrac{p}{2} \mathbb{E}[\log T] - I(\mathbf{N}; T) \\
&= \frac{p}{2} \log( 1 / p \sigma^2 ) - \frac{p}{2} \mathbb{E}[\log T] - I(\mathbf{N}; T) = c_{\mathrm{LB}}.
\end{aligned}
\end{equation}
Since the upper and lower bounds coincide at the constant term, the expansion is exact.
\end{IEEEproof}

\subsection{Implications of the Exact Scaling}
\label{subsec:implications}

The established exactness of the capacity expansion ($c_{\mathrm{LB}} = c_{\mathrm{UB}}$) leads to a profound information-theoretic conclusion: the asymptotic ``shaping gap'' for the NDFHL channel is exactly zero. This yields several key implications for boundary-induced hitting position-based communication systems:

\subsubsection{Asymptotic Optimality of Gaussian Signaling}
The vanishing gap implies that the isotropic Gaussian distribution $\mathbf{X} \sim \mathcal{N}(\mathbf{0}, \frac{P}{p}\mathbf{I}_p)$ is asymptotically capacity-achieving. Despite the non-Gaussian, semi-heavy-tailed nature of the physical noise $\mathbf{N}$, standard Gaussian codebooks incur no first-order rate loss as $P \to \infty$. This stands in sharp contrast to other non-Gaussian channels, such as the uniform noise channel, which incurs a shaping penalty of $\frac{1}{2} \log(2\pi e/12) \approx 0.255$ bits per dimension \cite{cover1999elements,Forney:1998}. The result suggests that standard coding schemes designed for AWGN channels can be directly applied to these molecular transport systems without a fundamental loss in degrees of freedom.

\subsubsection{Entropy-Dominant Scaling}
The capacity offset is determined exclusively by the differential entropy of the noise, $h(\mathbf{N})$. The expansion $C(P) \approx \frac{p}{2} \log(P) + [\frac{p}{2} \log(2\pi e/p) - h(\mathbf{N})]$ indicates that the channel behaves asymptotically like an equivalent AWGN channel with effective noise power:
\begin{equation}
P_{N}^{\mathrm{eff}} \triangleq \frac{1}{2\pi e} \exp\left( \frac{2}{p} h(\mathbf{N}) \right).
\end{equation}
This ``Entropy Power'' characterization confirms that while the drift-diffusion dynamics ($\lambda, u, \sigma$) generate a complex spatial distribution, their impact on transmission limits is fully encapsulated by the aggregate spatial uncertainty $h(\mathbf{N})$ they induce.

\subsubsection{Geometric Universality and Invariance}
At high SNR, the aggregate spatial uncertainty dominates the latent temporal fluctuations of $T$. Consequently, the zero-gap property holds uniformly for all $u > 0$ and $d \ge 2$, identifying Gaussian optimality as an intrinsic geometric feature of the first-hitting mechanism. This invariance ensures that the capacity scaling remains robust as the transport dynamics transition from the ballistic to the diffusion-dominated regime.


\section{The Singular Limit: Connection to Additive Cauchy Noise}
\label{sec:singular_limit}

Throughout this section, the discussion is interpretative.
All formal capacity theorems in this paper are established under the assumption
$u>0$, and no exchange of limits between $P\to\infty$ and $u\to 0$ is claimed.
Nevertheless, the exact entropy-dominant scaling law established for all fixed
$u>0$ provides a natural reference point for examining the singular limit in which
the normalized drift vanishes.

The capacity scaling law derived in Theorems~\ref{thm:main_scaling}
and~\ref{thm:refined_scaling} relies on the assumption $u>0$, which ensures that
the mixing variable $T$ admits finite moments and that the boundary-induced noise
$\mathbf N$ has finite covariance $\E\|\mathbf N\|^2=p\lambda/u$.
However, the underlying geometric construction suggests a continuous structural
transition as the drift strength decreases.
In particular, as $u\downarrow 0$, the $\mathrm{NDFHL}^{(d)}$ family degenerates to
a heavy-tailed distribution with infinite variance, while retaining a
well-defined and finite differential entropy, see \eqref{eq:gp_def}.
This observation motivates a comparison with the additive white Cauchy noise
(AWCN) model recently analyzed by Pang and Zhang~\cite{pang2025information},
and suggests that both regimes may be unified under a common entropy-based
geometric framework.

\subsection{Review of Capacity Scaling for Cauchy Noise}

The AWCN channel highlights a fundamental limitation of classical
signal-to-noise-ratio (SNR) based intuition.
For additive Cauchy noise, the variance is infinite, rendering the conventional
SNR identically zero for any finite input power constraint.
Despite this energetic singularity, Pang and Zhang~\cite{pang2025information}
established that the capacity of the scalar AWCN channel ($p=1$), subject to a
second-moment constraint $P$, still exhibits logarithmic growth with $P$.

Specifically, for noise $Z\sim\mathcal C(0,\lambda)$ with scale parameter $\lambda$,
the high-power capacity satisfies
\begin{equation}
\label{eq:pang_result_v3}
C_{\mathrm{Cauchy}}(P)
=
\frac{1}{2}\log\!\left(\frac{P}{\lambda^2}\right)
+
\text{const.}
+
o(1),
\qquad P\to\infty,
\end{equation}
where the constant term is bounded between two finite values.
The lower bound is achieved using Gaussian signaling together with the entropy
power inequality (EPI) \cite{shannon1948mathematical,Dembo:1991}, while the upper bound is obtained via a genie-aided
argument \cite{pang2025information} that renders the channel conditionally Gaussian.
This result demonstrates that the divergence of the noise variance does not, by
itself, impose a fundamental information-theoretic obstruction, and that the
degrees of freedom of the channel remain preserved.

\subsection{The Vanishing Drift Limit \texorpdfstring{$u\downarrow 0$}{u to 0}}

We now relate the $\mathrm{NDFHL}^{(d)}$ model to the Cauchy setting.
Recall from Section~\ref{sec:stochastic_foundation} that the mixing variable
$T$ follows the IG distribution
$IG(\lambda/\mu\sigma^2,\lambda^2/\sigma^2)$.

\begin{proposition}[Distributional Convergence]
\label{prop:cauchy_convergence}
As the normalized drift vanishes ($u\downarrow 0$), the mixing variable $T$
converges in distribution to the L\'evy distribution (a stable law \cite{Feller:1991} with index
$1/2$).
Consequently, the boundary-induced noise
$\mathbf N\sim\mathrm{NDFHL}^{(d)}(\lambda,u)$ converges in distribution to the
isotropic multivariate Cauchy distribution with scale parameter $\lambda$.
\end{proposition}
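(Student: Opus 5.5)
The plan is to work entirely with transforms: the Laplace transform of $T$ and the characteristic function of $\mathbf N$ from Lemma~\ref{lem:char_func}. By L\'evy's continuity theorem, pointwise convergence of these transforms to those of the claimed limits is enough.

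\textbf{Step 1 (the mixing time).} The IG Laplace transform is the MGF used in the proof of Lemma~\ref{lem:char_func}, evaluated at $-s$. With $\kappa/\nu=\lambda u$ and $2\nu^2/\kappa = 2/(u^2\sigma^2)$, it reads
\begin{equation}
\E[e^{-sT}]=\exp\!\Bigl(\lambda u-\lambda\sqrt{u^2+2s/\sigma^2}\Bigr),\qquad s\ge 0.
\end{equation}
For each fixed $s\ge0$, letting $u\downarrow 0$ gives $\exp(-\lambda\sqrt{2s}/\sigma)$. This is the Laplace transform of the L\'evy law with density
\begin{equation}
\frac{\lambda}{\sigma\sqrt{2\pi t^3}}\,e^{-\lambda^2/(2\sigma^2 t)},
\end{equation}
which is the one-sided stable law of index $1/2$ and the first-passage time of driftless Brownian motion to level $\lambda$.

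One can also see this directly from the densities: $f_T(t)$ tends pointwise to the L\'evy density as $\mu=u\sigma^2\to0$, since the exponent $-\kappa(t-\nu)^2/(2\nu^2 t)$ tends to $-\kappa/(2t)$. Scheff\'e's lemma then gives convergence in total variation, hence in distribution. Since the limit Laplace transform is continuous at $s=0$ with value $1$, the continuity theorem for Laplace transforms of nonnegative variables yields $T\Rightarrow T_0$. This also rules out any loss of mass, which is the one place that needs care: the IG mean $\lambda/\mu$ blows up, but the limit law is still proper, with only its mean infinite.

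\textbf{Step 2 (the noise).} By Lemma~\ref{lem:char_func},
\begin{equation}
\Phi_{\mathbf N}(\boldsymbol\omega)=\exp\bigl(-\lambda[\sqrt{u^2+\|\boldsymbol\omega\|^2}-u]\bigr)\to e^{-\lambda\|\boldsymbol\omega\|}
\end{equation}
for every $\boldsymbol\omega\in\R^p$. The limit is continuous at $\mathbf 0$ and is the characteristic function of the isotropic multivariate Cauchy law with scale $\lambda$, whose density is proportional to $\lambda(\|\mathbf n\|^2+\lambda^2)^{-d/2}$, consistent with the Poisson-kernel limit in Section~\ref{sec:NDFHL_distribution}. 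L\'evy's continuity theorem then gives $\mathbf N\Rightarrow$ Cauchy.

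To keep the link with the word ``consequently'' in the statement, I would also derive Step 2 from Step 1 via the GVM representation \eqref{eq:GVM_rep}. The map $t\mapsto e^{-\sigma^2 t\|\boldsymbol\omega\|^2/2}$ is bounded and continuous on $[0,\infty)$, so $T\Rightarrow T_0$ gives
\begin{equation}
\Phi_{\mathbf N}(\boldsymbol\omega)=\E\bigl[e^{-\sigma^2T\|\boldsymbol\omega\|^2/2}\bigr]\to \E\bigl[e^{-\sigma^2T_0\|\boldsymbol\omega\|^2/2}\bigr]=e^{-\lambda\|\boldsymbol\omega\|},
\end{equation}
where the last equality is the L\'evy Laplace transform at $s=\sigma^2\|\boldsymbol\omega\|^2/2$. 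Equivalently, $\sigma\sqrt{T}\mathbf Z\Rightarrow\sigma\sqrt{T_0}\mathbf Z$ by the continuous mapping theorem applied to the independent pair $(T,\mathbf Z)$.

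\textbf{Main obstacle.} The only subtle step is the interpretation of $u\downarrow0$ in Step 1. It must be taken with $\lambda$ and $\sigma$ fixed, so that $\mu\to0$ while $\kappa$ stays constant. Properness of the limit then needs the continuity-at-zero check above. Everything else reduces to evaluating closed-form transforms.
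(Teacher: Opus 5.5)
Your proposal is correct and follows the paper's route. The IG Laplace transform tends to the L\'evy transform $\exp(-\sqrt{2\kappa s})=\exp(-\lambda\sqrt{2s}/\sigma)$ as $\mu\to 0$ with $\kappa=\lambda^2/\sigma^2$ fixed, and evaluating it at $s=\sigma^2\|\boldsymbol\omega\|^2/2$ through the GVM representation gives the Cauchy characteristic function $e^{-\lambda\|\boldsymbol\omega\|}$. Your explicit use of the continuity theorems (with the properness check at $s=0$ and the optional Scheff\'e argument) supplies the rigor that the paper's proof sketch leaves implicit.
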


\begin{IEEEproof}[Proof sketch]
Recall that the first-passage time $T$ admits the IG law
$T\sim IG(\nu,\kappa)$ with mean $\nu=\lambda/\mu$ and shape
$\kappa=\lambda^2/\sigma^2$ as in \eqref{eq:IG_params}.
A convenient characterization is via the Laplace transform of the
$IG(\nu,\kappa)$ distribution:
\begin{equation}
\label{eq:IG_Laplace_nu_kappa}
\E\!\left[e^{-sT}\right]
=
\exp\!\left(
\frac{\kappa}{\nu}
\Bigl(
1-\sqrt{1+\frac{2\nu^2 s}{\kappa}}
\Bigr)
\right),
\qquad s\ge 0,
\end{equation}
see, e.g., \cite{Feller:1991}.
Consider the vanishing-drift limit $u\downarrow 0$, which corresponds to
$\mu\downarrow 0$ and hence $\nu=\lambda/\mu\to\infty$, while $\kappa$ remains
fixed.
Then, for any fixed $s\ge 0$,
\[
\sqrt{1+\frac{2\nu^2 s}{\kappa}}
=
\frac{\nu\sqrt{2s}}{\sqrt{\kappa}}\,(1+o(1)),
\qquad \nu\to\infty,
\]
and therefore
\[
\frac{\kappa}{\nu}
\Bigl(
1-\sqrt{1+\frac{2\nu^2 s}{\kappa}}
\Bigr)
=
-\sqrt{2\kappa s}+o(1).
\]
Substituting this into \eqref{eq:IG_Laplace_nu_kappa} yields
\begin{equation}
\label{eq:Levy_Laplace_kappa}
\E\!\left[e^{-sT}\right]
\longrightarrow
\exp\!\left(-\sqrt{2\kappa s}\right),
\qquad s\ge 0,
\end{equation}
which is the Laplace transform of the (one-sided) L\'evy distribution, i.e.,
the $1/2$-stable law. This proves the claimed distributional convergence of $T$.

Next, recall the GVM representation of the boundary noise:
\[
\mathbf N \stackrel{d}{=} \sigma\sqrt{T}\,\mathbf Z,
\qquad
\mathbf Z\sim\mathcal N(\mathbf 0,\mathbf{I}_p),
\]
with $\mathbf Z$ independent of $T$.
Conditioned on $T$, we have $\mathbf N\mid T\sim\mathcal N(\mathbf 0,\sigma^2 T I_p)$,
hence the characteristic function is
\[
\varphi_{\mathbf N}(\boldsymbol\omega)
=
\E\!\left[
\exp\!\left(
-\frac{\sigma^2 T}{2}\|\boldsymbol\omega\|^2
\right)
\right]
=
\E\!\left[e^{-sT}\right]_{\,s=\frac{\sigma^2}{2}\|\boldsymbol\omega\|^2}.
\]
Applying \eqref{eq:Levy_Laplace_kappa} with
$s=\frac{\sigma^2}{2}\|\boldsymbol\omega\|^2$ gives
\[
\varphi_{\mathbf N}(\boldsymbol\omega)
\longrightarrow
\exp\!\left(
-\sqrt{2\kappa\cdot\frac{\sigma^2}{2}\|\boldsymbol\omega\|^2}
\right)
=
\exp\!\left(-\lambda\|\boldsymbol\omega\|\right),
\]
where we used $\kappa=\lambda^2/\sigma^2$.
The limit $\exp(-\lambda\|\boldsymbol\omega\|)$ is the characteristic function of an
isotropic multivariate Cauchy distribution with scale parameter $\lambda$.
The claim follows.
\end{IEEEproof}

\begin{figure}[!t]
    \centering
    \includegraphics[width=1.0\linewidth]{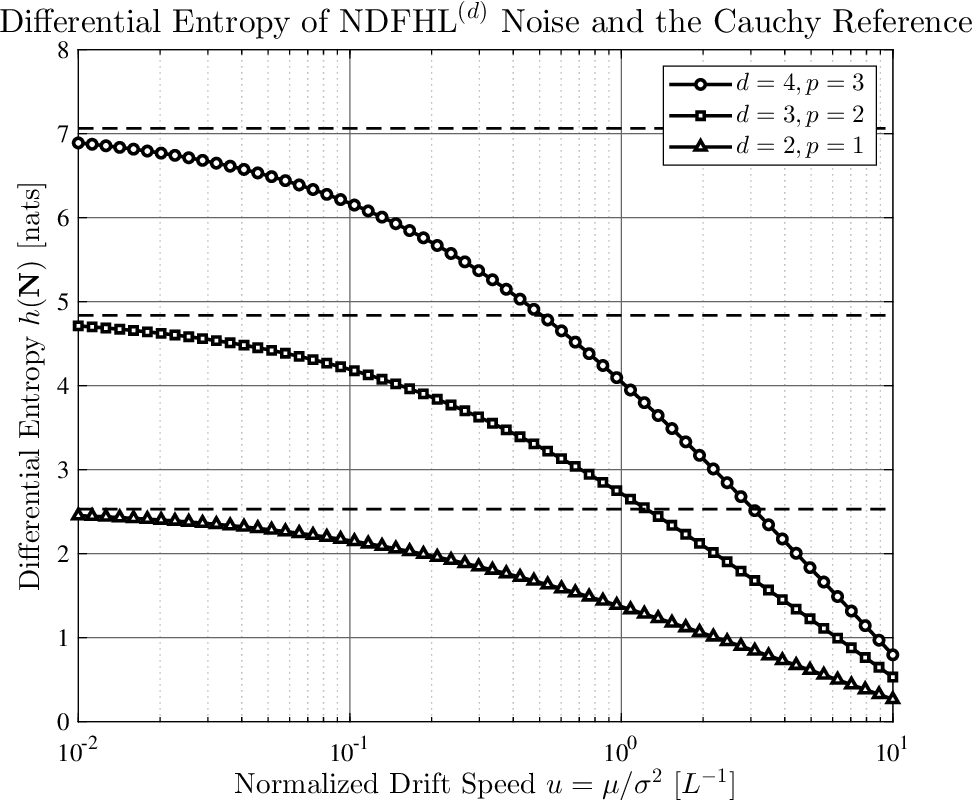}
    \caption{Differential entropy $h(\mathbf{N})$ of the $\mathrm{NDFHL}^{(d)}$
    distribution as a function of the normalized drift speed $u$ for dimensions
    $d=2,3,4$ with $\lambda=1$.
    Solid markers correspond to numerical integration results, while the dashed
    horizontal lines indicate the differential entropy of the corresponding
    isotropic multivariate Cauchy distributions obtained in the limit $u=0$.
    For all $u>0$, the entropy $h(\mathbf{N})$ is finite and is observed to approach,
    as $u \downarrow 0$, the Cauchy entropy value $g(p)$, see \eqref{eq:gp_def}, even
    though the second moment of the noise diverges in this limit.
    }
    \label{fig:entropy_convergence}
\end{figure}

In the limit procedure stated in Proposition~\ref{prop:cauchy_convergence}, the first moment $\E[T]=\lambda/u$ diverges, and the noise covariance
$\mathbf K_{\mathbf N}=(\lambda/u)\mathbf{I}_{d-1}$ becomes unbounded.
Nevertheless, numerical evaluation (see Fig.~\ref{fig:entropy_convergence})
indicates that the differential entropy $h(\mathbf N)$ remains finite and varies
continuously with respect to $u$.
Combining this observation with the refined high-SNR expansion in
Theorem~\ref{thm:refined_scaling} for fixed $u>0$ suggests that the leading-order
capacity scaling persists as
\begin{equation}
C_{\mathrm{NDFHL}}(P)
\approx
\frac{p}{2}\log P + \mathrm{const.},
\end{equation}
even as the system approaches the singular, infinite-variance regime.

For reference, the differential entropy of the isotropic multivariate Cauchy
distribution in $p$ dimensions admits a closed-form expression.
Let $\mathbf C_p$ denote a $p$-dimensional isotropic Cauchy random vector with scale
parameter $\lambda>0$, whose density is \cite{Verdu_Entropy23}
\begin{equation}
f_{\mathbf C_p}(\mathbf x)
=
\frac{\Gamma\!\left(\frac{p+1}{2}\right)}
{\pi^{\frac{p+1}{2}}\lambda^{p}}
\left(1+\frac{\|\mathbf x\|^2}{\lambda^2}\right)^{-\frac{p+1}{2}}.
\end{equation}
Its differential entropy is given by
\begin{equation}
h(\mathbf C_p)
=
g(p),
\end{equation}
where
\begin{equation}
\label{eq:gp_def}
g(p)
=
\log\!\Bigg(
\frac{\pi^{\frac{p+1}{2}}\lambda^{p}}
{\Gamma\!\left(\frac{p+1}{2}\right)}
\Bigg)
+
\frac{p+1}{2}
\Bigg[
\psi\!\left(\frac{p+1}{2}\right)
-
\psi\!\left(\frac12\right)
\Bigg],
\end{equation}
and $\psi(\cdot)$ denotes the digamma function.
In particular, for $p=1$ (corresponding to $d=2$), this reduces to
$g(1)=\log(4\pi\lambda)$.

\subsection{Entropy-Dominant Offset and Structural Continuity}

To quantify the connection with the AWCN benchmark, we define the asymptotic
capacity offset
\begin{equation}
\mathcal L(u)
\triangleq
\lim_{P\to\infty}
\left(
C(P)-\frac{p}{2}\log P
\right).
\end{equation}
From the refined expansion in Section~\ref{subsec:refined_scaling}, this offset is
entirely determined by the noise entropy,
\begin{equation}
\label{eq:offset_NDFHL}
\mathcal L(u)
=
\frac{p}{2}\log\!\left(\frac{2\pi e}{p}\right)
-
h(\mathbf N).
\end{equation}
For $p=1$ and $u\downarrow 0$, the noise converges to a scalar Cauchy random
variable with entropy $h(N)=g(1)=\log(4\pi\lambda)$, yielding
\begin{equation}
\lim_{u\downarrow 0}\mathcal L(u)
=
\frac{1}{2}\log(2\pi e)-\log(4\pi\lambda).
\end{equation}
This expression precisely recovers the capacity lower bound derived by Pang and Zhang in~\cite[Eq.~(8)]{pang2025information}, which corresponds to the achievable rate of AWCN using Gaussian inputs.
This agreement confirms that our geometric scaling law (Theorem~\ref{thm:main_scaling} and Theorem~\ref{thm:refined_scaling}) smoothly bridges the finite-variance and infinite-variance regimes, capturing the correct logarithmic dependence on the scale parameter $\lambda$ even at the singularity.


\section{Conclusion}
\label{sec:conclusion}

This work presented a rigorous stochastic and information-theoretic analysis of
a class of boundary-induced additive noise channels generated by
drift--diffusion processes.
By identifying the $\mathrm{NDFHL}^{(d)}$ distribution as a Gaussian
variance--mixture, we shifted the analytical focus from intractable
Bessel-type density expressions to the underlying transport mechanism on path
space.
This perspective enabled a transparent characterization of the channel in
terms of latent Gaussian structure and geometric observables associated with
first-hitting events.

\emph{First}, at the technical level, we established an exact high-SNR capacity
expansion under a second-moment constraint.
Beyond identifying the pre-log factor, we showed that the upper and lower bounds
match at the constant level, yielding a vanishing asymptotic gap.
As a consequence, isotropic Gaussian signaling is asymptotically
capacity-achieving for all fixed drift strengths $u>0$ and dimensions
$d\ge2$.
This zero-gap result demonstrates that, despite the non-Gaussian and
semi-heavy-tailed nature of the boundary-induced noise, no first-order rate
penalty is incurred by standard Gaussian codebooks in the high-power regime.

\emph{Second}, the refined expansion reveals an entropy-dominant universality
principle.
While the logarithmic growth is dictated by the input power, the entire
constant offset is governed solely by the differential entropy of the noise.
All geometric and physical parameters of the transport process---including the
boundary separation, drift strength, and diffusion coefficient---enter the
capacity only through the aggregate spatial uncertainty they induce.
In this sense, the high-SNR behavior of boundary-hitting channels is controlled
by geometry and entropy rather than energetic moments, placing the
$\mathrm{NDFHL}$ family within a broader class of channels that admit an
entropy-power interpretation.

\emph{Finally}, we explored the singular vanishing-drift limit as an
interpretative extension of the theory.
As the normalized drift approaches zero, the $\mathrm{NDFHL}$ distribution
degenerates to a heavy-tailed Cauchy law with infinite variance.
Although this regime lies outside the formal scope of the capacity theorems,
numerical evidence and structural continuity suggest that the entropy-dominant
scaling persists, connecting the finite-variance boundary-noise model with the
additive white Cauchy noise channel studied in recent work.
This observation points to a broader geometric robustness of first-hitting
location observables and highlights the potential of entropy-based methods for
analyzing singular stochastic transport limits.

Taken together, these results position geometry-induced Gaussian
variance--mixture channels as a natural and analytically tractable benchmark
for non-Gaussian additive noise models arising from stochastic transport.
They also suggest several promising directions for future work, including a
rigorous treatment of singular limits, extensions to more general boundary
geometries, and the interplay between entropy-dominant scaling and statistical
inference on path space.

\appendices


\section{Integrability and Entropy Finiteness of the NDFHL Distribution}
\label{app:integrability}

This appendix provides the rigorous mathematical justification for the finiteness of the key information-theoretic quantities appearing in the capacity scaling analysis. 
We first establish the existence of the logarithmic moments for the IG mixing variable $T$. 
Building on these results, we proceed to prove the finiteness of the differential entropy $h(\mathbf{N})$ and the mutual information $I(T; \mathbf{N})$ for the NDFHL family. 
These properties ensure that the constant terms $c_{\mathrm{LB}}$ and $c_{\mathrm{UB}}$ derived in Theorem \ref{thm:refined_scaling} are well-defined and finite.

\subsection{Inverse Gaussian Preliminaries}

Recall from Section \ref{sec:stochastic_foundation} that the first-hitting time $T$ follows the IG distribution, $T \sim IG(\nu, \kappa)$, with parameters determined by the physical channel geometry:
\begin{equation}
\label{eq:IG_phys_params}
\nu = \frac{\lambda}{\mu}, \qquad \kappa = \frac{\lambda^2}{\sigma^2}.
\end{equation}
Assuming positive drift ($\mu > 0$), both $\nu$ and $\kappa$ are strictly positive. The PDF is given by:
\begin{equation}
\label{eq:IG_density_T}
f_T(t) = \sqrt{\frac{\kappa}{2\pi t^3}} \exp\left( -\frac{\kappa (t - \nu)^2}{2\nu^2 t} \right), \quad t > 0.
\end{equation}
To facilitate the analysis of integrability, we rewrite the exponent as:
\begin{equation}
\frac{\kappa (t - \nu)^2}{2\nu^2 t} = \frac{\kappa t^2 - 2\kappa t \nu + \kappa \nu^2}{2\nu^2 t} = \frac{\kappa t}{2\nu^2} - \frac{\kappa}{\nu} + \frac{\kappa}{2t}.
\end{equation}
Thus, the density takes the form:
\begin{equation}
\label{eq:IG_density_expanded}
f_T(t) = C_{\nu,\kappa} \cdot t^{-3/2} \cdot \exp\left( -\frac{\kappa}{2\nu^2} t - \frac{\kappa}{2t} \right),
\end{equation}
where $C_{\nu,\kappa} = \sqrt{\frac{\kappa}{2\pi}} e^{\kappa/\nu}$.

\subsection{Logarithmic Moment Finiteness of IG}

The finiteness of the logarithmic moment $\mathbb{E}[\log T]$ is a necessary condition for the high-SNR expansion established in Lemma~\ref{lem:GVM_lower}. In this subsection, we provide a generalized integrability result for the IG variable $T$, demonstrating that both its logarithmic and inverse moments of arbitrary order are finite for all $u > 0$.

\begin{lemma}[Finiteness of Moments]
\label{lem:log_moment}
Let $T \sim IG(\nu, \kappa)$ with $\nu, \kappa > 0$. Then:
\begin{enumerate}
    \item The first inverse moment is finite: $\mathbb{E}[T^{-1}] = \nu^{-1} + \kappa^{-1} < \infty$.
    \item The mean absolute logarithm is finite: $\mathbb{E}\bigl[ |\log T| \bigr] < \infty$.
\end{enumerate}
\end{lemma}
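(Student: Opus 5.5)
The plan is to work directly with the expanded form \eqref{eq:IG_density_expanded} of the IG density, $f_T(t)=C_{\nu,\kappa}\,t^{-3/2}\exp\!\left(-a t-\frac{b}{t}\right)$ with $a\triangleq\kappa/(2\nu^2)$ and $b\triangleq\kappa/2$. Both parts of the lemma then reduce to integrals of the generalized-inverse-Gaussian type $\int_0^\infty t^{s-1}e^{-at-b/t}\,\mathrm{d}t$. For $a,b>0$ we will use the standard identity
\begin{equation*}
\int_0^\infty t^{s-1}e^{-at-b/t}\,\mathrm{d}t \;=\; 2\left(\frac{b}{a}\right)^{s/2}K_{s}\!\left(2\sqrt{ab}\right),
\end{equation*}
which holds for all real $s$ and is finite. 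Finiteness is clear on its own terms: near $t=0$ the factor $e^{-b/t}$ kills any power of $t$, and near $t=\infty$ the factor $e^{-at}$ does the same. Here $2\sqrt{ab}=\kappa/\nu\triangleq z$, and $\sqrt{a/b}=1/\nu$.

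\textbf{Item 1.} Normalization corresponds to $s=-1/2$ and the inverse moment to $s=-3/2$. Since $K_{-s}=K_s$, taking the ratio gives
\begin{equation*}
\mathbb{E}[T^{-1}] \;=\; \frac{(b/a)^{-3/4}\,K_{3/2}(z)}{(b/a)^{-1/4}\,K_{1/2}(z)} \;=\; \frac{1}{\nu}\,\frac{K_{3/2}(z)}{K_{1/2}(z)}.
\end{equation*}
For the Bessel ratio we use the half-integer closed forms already invoked in Section~\ref{subsec:special_cases}: $K_{1/2}(z)=\sqrt{\pi/(2z)}\,e^{-z}$ and $K_{3/2}(z)=\sqrt{\pi/(2z)}\,e^{-z}(1+1/z)$. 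This gives $K_{3/2}(z)/K_{1/2}(z)=1+\nu/\kappa$, and hence $\mathbb{E}[T^{-1}]=1/\nu+1/\kappa$.

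If we prefer to avoid Bessel functions, an alternative is the reflection substitution $t\mapsto \nu^2/t$. It maps $e^{-at-b/t}$ to itself and transforms $t^{-5/2}\,\mathrm{d}t$ into a combination of $t^{-3/2}$ and $t^{-1/2}$ weights. This expresses $\mathbb{E}[T^{-1}]$ through $\mathbb{E}[T]=\nu$ and a single integration by parts, and yields the same value.

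\textbf{Item 2.} We avoid computing $\mathbb{E}[\log T]$ exactly and use the elementary pointwise bound $|\log t|\le t+t^{-1}$ for $t>0$. This follows from $\log t\le t-1<t$ for the positive part and from $-\log t=\log(1/t)\le 1/t-1<1/t$ for the negative part. Taking expectations gives
\begin{equation*}
\mathbb{E}|\log T| \;\le\; \mathbb{E}[T]+\mathbb{E}[T^{-1}] \;=\; \nu+\frac{1}{\nu}+\frac{1}{\kappa}\;<\;\infty,
\end{equation*}
using $\mathbb{E}[T]=\nu$ and Item 1.

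The only step needing care is Item 1, namely the correct bookkeeping of the exponents $s$ and of the prefactor $(b/a)^{s/2}$ in the Bessel integral. Item 2 is immediate once Item 1 is in hand.
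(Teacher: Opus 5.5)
Your proof is correct, but it handles the two items differently from the paper.

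\textbf{The paper's route.} The paper does not prove item~1; it simply calls $\mathbb{E}[T^{-1}]=\nu^{-1}+\kappa^{-1}$ a standard property. It proves item~2 directly from the density. It splits $\mathbb{E}|\log T|$ at $t=1$. On $(0,1]$ it bounds $f_T(t)\le C_{\nu,\kappa}t^{-3/2}e^{-\kappa/(2t)}$ and substitutes $x=1/t$, which gives the convergent integral $\int_1^\infty(\log x)\,x^{-1/2}e^{-\kappa x/2}\,\mathrm{d}x$. On the tail it uses $\log t<t$ together with $\mathbb{E}[T]=\nu$.

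\textbf{Your route.} You derive item~1 exactly. Taking the ratio of two GIG--Bessel integrals avoids needing the normalizing constant, and your bookkeeping checks out: $s=-1/2$ and $s=-3/2$, $z=\kappa/\nu$, and $(b/a)^{-1/2}=1/\nu$. Item~2 then follows in one line from item~1 via $|\log t|\le t+t^{-1}$. The inverse moment takes the place of the paper's small-$t$ density estimate.

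\textbf{Comparison.} Your approach gives a complete proof of both items, an explicit bound $\mathbb{E}|\log T|\le \nu+\nu^{-1}+\kappa^{-1}$, and a logical link between the two parts. The paper's item-2 argument is more self-contained in one respect: it needs no Bessel identities and no knowledge of $\mathbb{E}[T^{-1}]$.

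\textbf{A small imprecision in your Bessel-free remark.} The reflection $t\mapsto\nu^2/t$ sends $t^{-5/2}\,\mathrm{d}t$ to $\nu^{-3}t^{1/2}\,\mathrm{d}t$, which says $\mathbb{E}[T^{-1}]=\nu^{-3}\mathbb{E}[T^2]$. It does not produce a mix of $t^{-3/2}$ and $t^{-1/2}$ weights. That mix comes from integrating $\frac{\mathrm{d}}{\mathrm{d}t}\bigl[t^{-1/2}e^{-at-b/t}\bigr]$ over $(0,\infty)$. This gives $b\,\mathbb{E}[T^{-1}]=\tfrac12+a\,\mathbb{E}[T]$, and hence the result, with no reflection needed.
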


\begin{IEEEproof}
The finiteness of $\mathbb{E}[T^{-1}]$ is a standard property of the Inverse Gaussian distribution. We focus on establishing $\mathbb{E}[|\log T|] < \infty$. We partition the expectation integral into two regimes: the origin $(0, 1]$ and the tail $(1, \infty)$.

\emph{(i) Behavior near the origin ($0 < t \le 1$):}
Using the bound $|\log t| = -\log t$ for $t \in (0, 1]$, the contribution to the expectation is:
\begin{equation}
I_0 = \int_0^1 (-\log t) f_T(t) \, \diff t.
\end{equation}
From \eqref{eq:IG_density_expanded}, noting that $e^{-\frac{\kappa}{2\nu^2}t} \le 1$ on this interval, we have:
\begin{equation}
f_T(t) \le C_{\nu,\kappa} \, t^{-3/2} e^{-\frac{\kappa}{2t}}.
\end{equation}
Substituting $x = 1/t$, the integral transforms to:
\begin{equation}
I_0 \le C_{\nu,\kappa} \int_1^\infty (\log x) \, x^{-1/2} e^{-\frac{\kappa}{2} x} \, \diff x.
\end{equation}
Since $\log x = o(x^\epsilon)$ as $x \to \infty$ for any $\epsilon>0$, there exists
a constant $C>0$ such that
\begin{equation}
(\log x)\,x^{-1/2} \;\le\; C x,
\qquad \text{for all } x \ge 1.
\end{equation}
Consequently,
\begin{equation}
(\log x)\,x^{-1/2} e^{-(\kappa/2)x}
\;\le\;
C x e^{-(\kappa/2)x},
\end{equation}
and the right-hand side is integrable over $[1,\infty)$.
Therefore, the integral $I_0$ is finite.

\emph{(ii) Behavior at the tail ($t > 1$):}
Here $|\log t| = \log t < t$ for $t$ sufficiently large. The contribution is:
\begin{equation}
I_\infty = \int_1^\infty (\log t) f_T(t) \, \diff t \le \int_1^\infty t f_T(t) \, \diff t \le \mathbb{E}[T].
\end{equation}
Since $\mathbb{E}[T] = \nu < \infty$, the tail integral $I_\infty$ converges.

Combining both regimes, $\mathbb{E}[|\log T|] = I_0 + I_\infty < \infty$.
\end{IEEEproof}

\subsection{Finiteness of Differential Entropy of IG}

\begin{lemma}[Finite Differential Entropy of $T$]
\label{lem:entropy_T}
The differential entropy $h(T)$ is finite.
\end{lemma}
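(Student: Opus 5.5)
We compute $h(T) = -\mathbb{E}[\log f_T(T)]$ directly from the expanded form \eqref{eq:IG_density_expanded}. Taking logarithms gives, for $t>0$,
\begin{equation}
-\log f_T(t) = -\log C_{\nu,\kappa} + \frac{3}{2}\log t + \frac{\kappa}{2\nu^2}\, t + \frac{\kappa}{2t}.
\end{equation}
The integrand $-\log f_T$ is therefore a finite linear combination of the functions $1$, $\log t$, $t$ and $t^{-1}$.

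The plan is to show that each of these has a finite expectation under $f_T$, and then invoke linearity. We get $h(T) = -\log C_{\nu,\kappa} + \frac{3}{2}\mathbb{E}[\log T] + \frac{\kappa}{2\nu^2}\mathbb{E}[T] + \frac{\kappa}{2}\mathbb{E}[T^{-1}]$, with every term finite. The required moments are available from the preceding results:
\begin{itemize}
\item $\mathbb{E}[T] = \nu < \infty$, since $\mu > 0$.
\item $\mathbb{E}[T^{-1}] = \nu^{-1} + \kappa^{-1}$, by Lemma~\ref{lem:log_moment}(1).
\item $\mathbb{E}[|\log T|] < \infty$, by Lemma~\ref{lem:log_moment}(2).
\end{itemize}
The constant $C_{\nu,\kappa} = \sqrt{\kappa/(2\pi)}\, e^{\kappa/\nu}$ is finite and positive, so $\log C_{\nu,\kappa}$ is finite.

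The only subtle point is justifying the splitting of $\int f_T \log f_T$ into separate integrals, since $\log f_T$ is unbounded both near $0$ and near $\infty$. To handle this, we bound the absolute integrand pointwise:
\begin{equation}
|\log f_T(t)| \le |\log C_{\nu,\kappa}| + \tfrac{3}{2}|\log t| + \tfrac{\kappa}{2\nu^2}\, t + \tfrac{\kappa}{2t}.
\end{equation}
The right-hand side is integrable against $f_T$ by the moment bounds listed above. Hence $f_T \log f_T \in L^1$, the Lebesgue integral defining $h(T)$ exists, and the term-by-term evaluation is legitimate. This yields $|h(T)| < \infty$.

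As an optional check, one may substitute the closed form $\mathbb{E}[\log T]$ of the IG law, which involves the derivative of $K_\nu$ with respect to its order. This recovers the known IG entropy expression, but it is not needed for finiteness.
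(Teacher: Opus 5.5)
Your proposal is correct and follows the paper's proof essentially verbatim: expand $\log f_T$ into the constant, $\log t$, $t$, and $t^{-1}$ terms, then invoke $\mathbb{E}[T]=\nu$ together with Lemma~\ref{lem:log_moment} to see that each expectation is finite. Your pointwise bound on $|\log f_T|$, which justifies splitting the integral term by term, is a small rigor step that the paper leaves implicit.
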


\begin{IEEEproof}
The differential entropy is defined as $h(T) = -\mathbb{E}[\log f_T(T)]$. Taking the logarithm of the density expression in \eqref{eq:IG_density_expanded}:
\begin{equation}
\log f_T(T) = \log C_{\nu,\kappa} - \frac{3}{2} \log T - \frac{\kappa}{2\nu^2} T - \frac{\kappa}{2} T^{-1}.
\end{equation}
Taking the expectation:
\begin{equation}
h(T) = -\log C_{\nu,\kappa} + \frac{3}{2} \mathbb{E}[\log T] + \frac{\kappa}{2\nu^2} \mathbb{E}[T] + \frac{\kappa}{2} \mathbb{E}[T^{-1}].
\end{equation}
From Lemma \ref{lem:log_moment}, $\mathbb{E}[\log T]$ and $\mathbb{E}[T^{-1}]$ are finite. Since $\mathbb{E}[T] = \nu$ is also finite, every term on the RHS is bounded. Thus, $h(T)$ is finite (i.e., $h(T)\in\R$).
\end{IEEEproof}

\subsection{Finiteness of the NDFHL Entropy}
\label{app:entropy_finite}

We confirm that the differential entropy of the boundary-induced noise $\mathbf{N}$ is finite, ensuring that the constant term $c_{\mathrm{UB}}$ derived in Theorem~\ref{thm:refined_scaling} is meaningful.

\begin{lemma}
\label{prop:noise_entropy_finite}
Let $\mathbf{N} \sim \mathrm{NDFHL}^{(d)}(\lambda, u)$ with $\lambda, u > 0$. Then $|h(\mathbf{N})| < \infty$.
\end{lemma}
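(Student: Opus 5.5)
The plan is to sandwich $h(\mathbf{N})$ between two finite quantities, using Lemma~\ref{lem:entropy_bounds} together with the integrability results of Lemma~\ref{lem:log_moment}. The one subtle point is that the inequality $h(\mathbf{N}) \ge h(\mathbf{N}\mid T)$ ("conditioning reduces entropy") presupposes that $h(\mathbf{N})$ is well defined, i.e., that the integral $-\int f_{\mathbf N}\log f_{\mathbf N}$ does not take the form $\infty-\infty$. I would therefore first show that $\mathbb{E}[|\log f_{\mathbf{N}}(\mathbf{N})|]<\infty$, and only then invoke the bounds.

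\textbf{Upper side.} Since $\mathbf{K}_{\mathbf N}=\frac{\lambda}{u}\mathbf I_p$ is finite (Lemma~\ref{lem:variance}), the Gaussian max-entropy argument gives $h(\mathbf N)\le \bar h(u)<\infty$. To make this rigorous I would control the positive part of $-\log f_{\mathbf N}$ via the relative entropy to $\mathcal N(\mathbf 0,\frac{\lambda}{u}\mathbf I_p)$. Concretely, $D(f_{\mathbf N}\|\phi)\ge 0$ is always well defined, and the identity $-\int f\log\phi=\bar h(u)$ uses only the second moment of $\mathbf N$. This bounds $\int f(-\log f)^{+}$.

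\textbf{Lower side.} Here I would show that $f_{\mathbf N}$ is bounded, so that $-\log f_{\mathbf N}\ge -\log\sup f_{\mathbf N}$. The mixture representation gives
\[
f_{\mathbf N}(\mathbf n)=\mathbb E\bigl[(2\pi\sigma^2T)^{-p/2}e^{-\|\mathbf n\|^2/(2\sigma^2T)}\bigr]\le (2\pi\sigma^2)^{-p/2}\,\mathbb E[T^{-p/2}].
\]
The IG density \eqref{eq:IG_density_expanded} has the factor $e^{-\kappa/(2t)}$, so all inverse moments $\mathbb E[T^{-k}]$ are finite. This extends item~1 of Lemma~\ref{lem:log_moment} by the same substitution $x=1/t$. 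Alternatively, one can read boundedness directly from \eqref{eq:NDFHL_pdf}: $\rho\ge\lambda>0$ and $K_{d/2}$ is continuous and decreasing on $(0,\infty)$, so $f_{\mathbf N}\le f_{\mathbf N}(\mathbf 0)<\infty$. Either route shows that the negative part of $-\log f_{\mathbf N}$ is bounded, hence $h(\mathbf N)>-\infty$ and $h(\mathbf N)$ is well defined.

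\textbf{Quantitative lower bound and main obstacle.} With $h(\mathbf N)$ now known to exist, the lower bound $h(\mathbf N)\ge \underline h(u)=\frac p2\log(2\pi e\sigma^2)+\frac p2\mathbb E[\log T]$ follows from Lemma~\ref{lem:entropy_bounds} (or from concavity of entropy in the mixture). This bound is finite by Lemma~\ref{lem:log_moment}, and it even gives the quantitative sandwich \eqref{eq:entropy_inequality}. I expect the only delicate step to be this well-definedness issue, specifically justifying the conditioning inequality for continuous mixtures. The boundedness-of-density argument above disposes of it cleanly, so I would lead with that. As a by-product, $I(T;\mathbf N)=h(\mathbf N)-h(\mathbf N\mid T)\le \frac p2(\log\mathbb E T-\mathbb E\log T)<\infty$, which is what Theorem~\ref{thm:refined_scaling} needs.
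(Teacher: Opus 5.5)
Your proof is correct, and its skeleton matches the paper's. The paper simply quotes the sandwich $\underline h(u)\le h(\mathbf N)\le \bar h(u)$ from Lemma~\ref{lem:entropy_bounds}. It notes $\bar h(u)<\infty$ because $\lambda/u\in(0,\infty)$, and gets $\underline h(u)>-\infty$ from $\mathbb E[|\log T|]<\infty$ (Lemma~\ref{lem:log_moment}).

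What you add is a preliminary integrability step that the paper skips. The paper's lower bound rests on $h(\mathbf N)-h(\mathbf N\mid T)=I(\mathbf N;T)\ge 0$, which tacitly assumes $h(\mathbf N)$ is well defined and not of the form $\infty-\infty$. You settle this cleanly. You bound $f_{\mathbf N}$, either via $\mathbb E[T^{-p/2}]<\infty$ or more directly via $\rho\ge\lambda$ and the monotonicity of $K_{d/2}$, giving $f_{\mathbf N}\le f_{\mathbf N}(\mathbf 0)$. You then control the other side with Gibbs' inequality against $\phi$.

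This step alone already proves the lemma, since it gives $-\log f_{\mathbf N}(\mathbf 0)\le h(\mathbf N)\le \bar h(u)$. So for finiteness alone you need neither Lemma~\ref{lem:log_moment} nor the conditioning inequality. The paper's route, in exchange, yields the sharper and physically meaningful lower bound $h(\mathbf N\mid T)$. That bound is what is actually used for the finiteness of $I(T;\mathbf N)$ and the offset identity in Theorem~\ref{thm:refined_scaling}, and your proposal recovers it as well once well-definedness is in place.
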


\begin{IEEEproof}
Recall from Lemma~\ref{lem:entropy_bounds} in Section~\ref{sec:properties} that the differential entropy is bounded by:
\begin{equation}
\underline{h}(u) \le h(\mathbf{N}) \le \bar{h}(u).
\end{equation}
We examine the finiteness of these bounds:
\begin{itemize}
    \item \textbf{Upper Bound:} The upper bound is given by $\bar{h}(u) = \frac{p}{2} \log(2\pi e \frac{\lambda}{u})$. Since $u > 0$ and $\lambda > 0$, the argument of the logarithm is strictly positive and finite. Thus, $h(\mathbf{N}) < \infty$.
    \item \textbf{Lower Bound:} The lower bound is given by $\underline{h}(u) = \frac{p}{2} \log(2\pi e \sigma^2) + \frac{p}{2} \mathbb{E}[\log T]$. By Lemma~\ref{lem:log_moment} established in this Appendix, we have $\mathbb{E}[|\log T|] < \infty$, which implies $\mathbb{E}[\log T]$ is finite. Consequently, $h(\mathbf{N}) > -\infty$.
\end{itemize}
Since both bounds are finite, we conclude that $|h(\mathbf{N})| < \infty$.
\end{IEEEproof}


\subsection{Finiteness of the Mutual Information between NDFHL and the Mixing Variable}\label{subsec:finite_MI}The validity of the lower bound expression in Theorem~\ref{thm:refined_scaling} relies on the finiteness of the mutual information between the mixing time and the spatial noise. We formally establish this property here.\begin{lemma}\label{lem:finite_MI}Let $\mathbf{N} \sim \mathrm{NDFHL}^{(d)}(\lambda, u)$ with $u > 0$. The mutual information $I(T; \mathbf{N})$ between the first-hitting time $T$ and the location $\mathbf{N}$ is finite.\end{lemma}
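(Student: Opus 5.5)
The plan is to write the mutual information as a difference of two finite differential entropies and bound each side. Specifically,
\[
I(T;\mathbf N)=h(\mathbf N)-h(\mathbf N\mid T).
\]
This identity is legitimate once both terms on the right are finite real numbers. Both are already controlled by earlier results:
\begin{itemize}
\item Lemma~\ref{prop:noise_entropy_finite} gives $|h(\mathbf N)|<\infty$.
\item Lemma~\ref{lem:entropy_bounds} gives the explicit formula
\[
h(\mathbf N\mid T)=\tfrac p2\log(2\pi e\sigma^2)+\tfrac p2\mathbb E[\log T].
\]
\item Lemma~\ref{lem:log_moment} shows this conditional entropy is finite, since $\mathbb E|\log T|<\infty$.
\end{itemize}
It follows that $0\le I(T;\mathbf N)<\infty$. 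Using the Max-Ent bound from Lemma~\ref{lem:entropy_bounds} we even get the explicit estimate
\[
I(T;\mathbf N)\le \tfrac p2\bigl(\log\mathbb E[T]-\mathbb E[\log T]\bigr),
\]
which is finite for $u>0$.

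The key technical step, and the main obstacle, is justifying the identity $I(T;\mathbf N)=h(\mathbf N)-h(\mathbf N\mid T)$ for a mixed pair: $T$ is continuous, and $\mathbf N$ given $T$ is Gaussian. I would define $I$ via the Kullback--Leibler divergence $D(P_{T,\mathbf N}\|P_T\otimes P_{\mathbf N})$. Both measures have densities, namely $f_T(t)\varphi_t(\mathbf n)$ and $f_T(t)f_{\mathbf N}(\mathbf n)$, where $\varphi_t$ is the $\mathcal N(\mathbf 0,\sigma^2t\mathbf I_p)$ density. Hence
\[
I=\mathbb E\bigl[\log\varphi_T(\mathbf N)-\log f_{\mathbf N}(\mathbf N)\bigr].
\]
To split this expectation into two finite pieces, I need each log-density term to be integrable:
\begin{itemize}
\item For the first term,
\[
\log\varphi_T(\mathbf N)=-\tfrac p2\log(2\pi\sigma^2T)-\|\mathbf Z\|^2/2,
\]
which is integrable because $\mathbb E|\log T|<\infty$ and $\mathbb E\|\mathbf Z\|^2=p$.
\item For the second term, I need $\mathbb E|\log f_{\mathbf N}(\mathbf N)|<\infty$, not merely finiteness of $h(\mathbf N)$ as a possibly $\infty-\infty$ expression. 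I would obtain this directly from the Bessel form \eqref{eq:NDFHL_pdf}. The density $f_{\mathbf N}$ is bounded above, since $\rho\ge\lambda$ and $z^{d/2}K_{d/2}(z)$ is bounded, so $\log f_{\mathbf N}$ is bounded above. From below, the tail asymptotic \eqref{eq:semi-HT} together with continuity and positivity give
\[
-\log f_{\mathbf N}(\mathbf n)\le C_1+u\|\mathbf n\|+\tfrac{d+1}{2}\log(1+\|\mathbf n\|),
\]
and this is integrable because $\mathbb E\|\mathbf N\|\le\sqrt{p\lambda/u}<\infty$ by Lemma~\ref{lem:variance}.
\end{itemize}
With both pieces integrable, the expectation splits into $-h(\mathbf N\mid T)+h(\mathbf N)$, which closes the argument.
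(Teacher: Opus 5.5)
Your proposal is correct and takes essentially the paper's route: the paper just invokes the decomposition of Lemma~\ref{lem:entropy_bounds} to get $0\le I(T;\mathbf N)\le \frac{p}{2}\left(\log\mathbb E[T]-\mathbb E[\log T]\right)$, which is finite for $u>0$ because $\mathbb E[T]<\infty$ and $\mathbb E|\log T|<\infty$. Your extra check that the Kullback--Leibler-defined $I(T;\mathbf N)$ really equals $h(\mathbf N)-h(\mathbf N\mid T)$, by showing $\log\varphi_T(\mathbf N)$ and $\log f_{\mathbf N}(\mathbf N)$ are integrable, is sound and fills in a step the paper takes for granted.
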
\begin{IEEEproof}
Since $\mathbf{N}$ follows the $\mathrm{NDFHL}^{(d)}$ distribution, we can directly invoke the entropy decomposition and bounds established in Lemma~\ref{lem:entropy_bounds} (Section~\ref{subsec:entropy_analysis}). Specifically, the mutual information is bounded by the Jensen gap of the mixing time:
\begin{equation}
0 \le I(T; \mathbf{N}) \le \frac{p}{2} \left( \log \mathbb{E}[T] - \mathbb{E}[\log T] \right).
\end{equation}
For $u > 0$, the first-hitting time $T$ follows an Inverse Gaussian distribution with a finite mean $\mathbb{E}[T] = \lambda/(u\sigma^2) < \infty$. Furthermore, Lemma~\ref{lem:log_moment} in this Appendix established that $\mathbb{E}[|\log T|] < \infty$, which implies that $\mathbb{E}[\log T]$ is finite. Since the upper bound is finite, it follows that $0\le I(T; \mathbf{N}) < \infty$.
\end{IEEEproof}


\section{The GH and NIG-Type Distribution Families}
\label{app:GH_NIG}

\begin{table*}[ht]
\centering
\caption{Relationship between the GH, NIG, and NDFHL distribution families.}
\label{tab:GH_NIG_NDFHL}
\begin{tabular}{lccc}
\hline
 & GH & NIG & NDFHL \\ \hline
Support & $\mathbb R$ (or $\R^{d-1}$ for elliptic GH) & $\mathbb R$ (or $\R^{d-1}$ for elliptic NIG)& $\mathbb R^{d-1}$ \\[2pt]
Primary role & Umbrella family & GH subclass & Planar exit law \\[2pt]
Key parameters &
$(\nu_{\mathrm{gh}},\alpha,\beta,\delta,\mu)$ &
$(\alpha,\beta,\delta,\mu)$ &
$(\lambda, u)$ \\[2pt]
Bessel kernel &
$K_{\nu_{\mathrm{gh}}-\frac12}$ &
$K_{1}$ &
$K_{\frac{d}{2}}$ \\[2pt]
Stochastic origin &
Variance--mean mixture &
Gaussian + IG mixture &
Boundary hitting \\[2pt]
Exact coincidence &
-- &
GH with $\nu_{\mathrm{gh}}=-\tfrac12$ &
Multivariate Cauchy (under limit $u\to 0$) \\ \hline
\end{tabular}
\end{table*}

\subsection{The generalized hyperbolic family}

The GH family, originally introduced by Barndorff-Nielsen \cite{BarndorffNielsen:1977}, forms a broad and flexible class of absolutely continuous distributions, parameterized in the univariate case by $(\nu_{\mathrm{gh}},\alpha,\beta,\delta,\mu)$, where $\nu_{\mathrm{gh}} \in \mathbb{R}$ is the index parameter, while $\alpha>|\beta|$ and $\delta>0$. A commonly used representation of the univariate GH probability density function is
\begin{equation}
\label{eq:GH_pdf}
\begin{aligned}
f_{\mathrm{GH}}(x)
&=
\frac{(\gamma/\delta)^{\nu_{\mathrm{gh}}}}
{\sqrt{2\pi}\,K_{\nu_{\mathrm{gh}}}(\delta\gamma)}
\\
&\quad \times
\frac{
K_{\nu_{\mathrm{gh}}-\frac12}\!\Big(
\alpha\,\sqrt{\delta^2+(x-\mu)^2}
\Big)
}{
\Big(\sqrt{\delta^2+(x-\mu)^2}/\alpha\Big)^{\frac12-\nu_{\mathrm{gh}}}
}
\,\exp\!\bigl(\beta(x-\mu)\bigr).
\end{aligned}
\end{equation}
where $\gamma:=\sqrt{\alpha^2-\beta^2}$ and $K_\nu(\cdot)$ denotes the modified Bessel function of the second kind.

Beyond the univariate setting, the GH family admits a natural \emph{elliptically contoured} multivariate extension \cite{McNeil:2015}. A random vector $\mathbf{X} \in \mathbb{R}^p$ is said to follow an elliptic GH distribution if it admits the stochastic representation
\begin{equation}
\label{eq:elliptic_GH_mix}
\mathbf{X}
=
\boldsymbol{\mu}
+
W\,\boldsymbol{\beta}
+
\sqrt{W}\,A\,\mathbf{Z},
\end{equation}
where $\mathbf{Z}\sim\mathcal N(\mathbf{0}, \mathbf{I}_p)$, $A A^\mathsf T = \Sigma$ is a positive definite dispersion matrix, and the mixing variable $W$ follows a generalized-inverse-Gaussian (GIG) distribution \cite{jorgensen2012}. The resulting density depends on $\mathbf{x}$ only through the Mahalanobis radius $\sqrt{(\mathbf{x}-\boldsymbol{\mu})^\mathsf T \Sigma^{-1}(\mathbf{x}-\boldsymbol{\mu})}$ \cite{Mahalanobis:1936}, reducing to \eqref{eq:GH_pdf} when $p=1$.

\subsection{The normal-inverse-Gaussian family}

The NIG family \cite{BarndorffNielsen:1997} corresponds to the special case $\nu_{\mathrm{gh}}=-\tfrac12$ of the GH family. In one dimension, the probability density function simplifies to
\begin{equation}
\label{eq:NIG_pdf}
f_{\mathrm{NIG}}(x)
=
\frac{\alpha\delta}{\pi}
\,
\frac{
\exp\bigl(\delta\gamma+\beta(x-\mu)\bigr)
}{
\sqrt{\delta^2+(x-\mu)^2}
}
\,
K_1\!\left(
\alpha\sqrt{\delta^2+(x-\mu)^2}
\right),
\end{equation}
where $\gamma=\sqrt{\alpha^2-\beta^2}$. The NIG family has semi-heavy (exponentially damped) tails and admits a convenient IG variance--mean mixture representation, rendering it particularly compatible with diffusion-driven observables.

In $\mathbb{R}^p$, the elliptic (or isotropic) NIG distribution is defined analogously by setting $\nu_{\mathrm{gh}}=-\tfrac12$ in the multivariate GH construction. In the symmetric case $\boldsymbol{\beta}=\mathbf{0}$, the resulting density takes the radial form:
\begin{equation}
\label{eq:vector_NIG_pdf}
f_{\mathrm{NIG}}^{(p)}(\mathbf{x})
\propto
\frac{
K_{\tfrac{p+1}{2}}\!\left(
\alpha\sqrt{\delta^2+\|\mathbf{x}-\boldsymbol{\mu}\|^2_{\Sigma^{-1}}}
\right)
}{
\left(
\sqrt{\delta^2+\|\mathbf{x}-\boldsymbol{\mu}\|^2_{\Sigma^{-1}}}
\right)^{\tfrac{p+1}{2}}
},
\end{equation}
where $\|\mathbf{y}\|^2_{\Sigma^{-1}} = \mathbf{y}^\mathsf T \Sigma^{-1} \mathbf{y}$. The order of the Bessel kernel adjusts automatically with the ambient dimension.

\subsection{Relevance to NDFHL noise family}

The connection between the NDFHL family defined in this paper and the NIG-type distributions becomes transparent at the level of radial structure. Recall that the $\mathrm{NDFHL}^{(d)}(\lambda, u)$ density defined in \eqref{eq:NDFHL_pdf} (where the noise dimension is $p=d-1$) depends on the spatial coordinate $\mathbf{n} \in \mathbb{R}^p$ only through $\sqrt{\|\mathbf{n}\|^2+\lambda^2}$, with kernel:
\begin{equation}
K_{\frac{d}{2}}\!\left(
u\sqrt{\|\mathbf{n}\|^2+\lambda^2}
\right).
\end{equation}
Noting that $\frac{d}{2} = \frac{p+1}{2}$, this coincides (up to normalization) with the isotropic vector NIG density \eqref{eq:vector_NIG_pdf} under the parameter identification:
\begin{equation}
\alpha = u,
\qquad
\boldsymbol{\beta} = \mathbf{0},
\qquad
\delta = \lambda,
\qquad
\boldsymbol{\mu} = \mathbf{0},
\qquad
\Sigma = \mathbf{I}_p.
\end{equation}
Under this identification, the NDFHL family can be viewed as a geometrically induced, isotropic NIG-type distribution in $\mathbb{R}^{d-1}$, whose parameters are fully determined by the normalized drift strength $u$ and the Tx--Rx separation $\lambda$.

While the GH family allows for general skewness and elliptical anisotropy, the isotropic NIG subclass captures the essential semi-heavy-tailed and mixture structures of NDFHL observables.
We adopt the specific terminology ``$\mathrm{NDFHL}$'' primarily to emphasize the explicit dependence on the physical transport parameters (normalized drift $u$ and distance $\lambda$) and its role as a channel noise model, rather than to claim the discovery of a distinct statistical family.

To clarify the relationship between the proposed boundary-induced noise model and existing statistical classifications, Table~\ref{tab:GH_NIG_NDFHL} summarizes the conceptual distinction between these families.

\balance
\bibliographystyle{IEEEtran}
\bibliography{refs-v2}

\end{document}